\newtheorem{thm}{Theorem}
\newtheorem{corollary}[thm]{Corollary}
\newtheorem{theorem}[thm]{Theorem}
\newtheorem{lemma}[thm]{Lemma}
\newtheorem{property}[thm]{Property}
\newtheorem{definition}[thm]{Definition}
\newcommand\Voisin{\Gamma}
\newcommand\B[2]{b{(#1,#2)}}
\newcommand\Rr[2]{r_1{(#1,#2)}}
\newcommand\Rl[2]{r_2{(#1,#2)}}
\renewcommand\bot{{null}}
\newcommand\carre{\scriptscriptstyle\blacksquare}
\newcommand\rond{\displaystyle\bullet}
\title{The Manne \emph{et al.} self-stabilizing $\frac23-$approximation matching algorithm is sub-exponential.}
\author[1]{Johanne Cohen}
\author[2]{Jonas Lef\`evre}
\author[3]{Khaled Ma\^amra}
\author[1]{George Manoussakis}
\author[3]{Laurence Pilard}
\affil[1]{LRI-CNRS, Universit\'e Paris-Sud, Universit\'e Paris Saclay, France\\
\texttt{\{johanne.cohen,george.manoussakis\}@lri.fr}}
\affil[2]{ IRIF, Universit\'e Paris-Diderot -- Paris 7 , France, 
\texttt{\{jonas.lefevre\}@irif.fr}}
\affil[3]{LI-PaRAD, Universit\'e Versailles-St. Quentin, Universit\'e Paris Saclay,  France\\

\texttt{\{khaled.maamra, laurence.pilard\}@uvsq.fr}}
\begin{document}

\date{}
\maketitle
\thispagestyle{empty}

 \begin{abstract}
Manne \emph{et al.}  \cite{ManneMPT11} designed  the first algorithm  computing a maximal matching that is a $\frac{2}{3}$-approximation of the maximum matching in $2^{O(n)}$ moves.  However, the complexity tightness was not proved. In this paper, we exhibit a sub-exponential execution of this matching algorithm :  this algorithm  can stabilize after at most $\Omega(2^{\sqrt n})$ moves under the central daemon.
\end{abstract}

{\bf Keywords:} Search games, randomized algorithms, competitive analysis, game theory

\section{Introduction}

In graph theory, a \emph{matching} $M$ in a graph is a set of edges without common vertices.  A matching is \emph{maximal} if no proper superset of $M$ is also a matching. A \emph{maximum} matching is a maximal matching with the highest cardinality among all possible maximal matchings. In this paper, we present a self-stabilizing algorithm for finding a maximal matching. Self-stabilizing algorithms \cite{Dijkstra74,Dolev00}, are distributed algorithms that recover after any transient failure without external intervention \emph{i.e.} starting from any arbitrary initial state, the system eventually converges to a correct behavior. The environment of self-stabilizing algorithms is modeled by the notion of \emph{daemon}. A daemon allows to capture the different behaviors of such algorithms accordingly to the execution environment. Three major types of daemons exist: the \emph{sequential}, the \emph{synchronous} and the \emph{distributed} ones. The sequential daemon means that exactly one eligible process is scheduled for execution at a time. The synchronous daemon means that every eligible process is scheduled for execution at a time. The distributed daemon means that any subset of eligible processes is scheduled for execution at a time. In an orthogonal way, a daemon can be \emph{fair} (meaning that every eligible process is eventually scheduled for execution) or \emph{adversarial} (meaning that the daemon only guarantees global progress, \emph{i.e.} at any time, at least one eligible process is scheduled for execution).

\section{Related Works}

Matching problems have received a lot of attention in different areas. Dynamic load balancing and job scheduling in parallel and distributed networks can be solved by algorithms using a matching set of communication links~\cite{BerenbrinkFM08,GhoshM96}. Moreover, the matching  problem has been recently  studied in the algorithmic game theory.  Indeed, the seminal problem relative to matching introduced by Knuth  is the stable marriage problem \cite{Knuth}. This problem can be modeled as a game with economic interactions such as two-sided markets \cite{AckermannGMRV11} or as a game with preference relations in a social network \cite{Hoefer13}.  

Several self-stabilizing algorithms have been proposed to compute maximal matching in unweighted or weighted general graphs.  For an unweighted graph, Hsu and Huang \cite{HsuH92} gave the first algorithm and proved a bound of $O(n^3)$ on the number of moves under a sequential adversarial daemon. The complexity analysis is completed by Hedetniemi et al. \cite{HedetniemiJS01} to $O(m)$ moves. Manne et al. \cite{ManneMPT11} presented a self-stabilizing algorithm for finding a $2/3$-approximation of a maximum matching.  The complexity of this algorithm  is proved to be $O(2^n)$ moves under a distributed adversarial daemon.

\section{Model}

A system consists of a set of processes where two adjacent processes can communicate with each other. The communication relation is typically represented by a graph $G = (V, E)$ where $|V | = n$ and $|E| = m$. Each process corresponds to a node in $V$ and two processes $u$ and $v$ are adjacent if and only if $(u,v)\in E$. The set of neighbors of a process $v$ is denoted by $\Voisin(v)$ and is the set of all processes adjacent to $v$. 

We consider one communication model : the \emph{state model}.  In the \emph{state model}, each process maintains a set of \emph{local variables} that makes up the \emph{local state} of the process.  A process can read its local variables and the local variables of its neighbors, but it can write only in its own local variables. A \emph{configuration} $C$ is a set of the local states of all processes in the system.  Each process executes the same algorithm that consists of a set of \emph{rules}. Each rule is of the form of $<guard> \to <command>$. The \emph{guard} is a boolean function over the variables of both the process and its neighbors. The \emph{command} is a sequence of actions assigning new values to the local variables of the process.

A rule is \emph{enabled} in a configuration $C$ if the guard is true in $C$. A process is \emph{activable} in a configuration $C$ if at least one of its rules is enabled.  An \emph{execution} is an alternate sequence of configurations and transitions ${\cal E} = C_0,A_0,\ldots,C_i,A_i, \ldots$, such that $\forall i\in \mathbb{N}^*$, $C_{i+1}$ is obtained by executing the command of at least one rule that is enabled in $C_i$ (a process that executes such a rule makes a \emph{move}). More precisely, $A_i$ is the non empty set of enabled rules in $C_i$ that has been executed to reach $C_{i+1}$ such that each process has at most one of its rules in $A_i$.  An \emph{atomic operation} is such that no change can takes place during its run, we usually assume an atomic operation is instantaneous. In the case of the state model, such an operation corresponds to a rule.  We use the following notation : $C_i \to C_{i+1}$.  An execution is \emph{maximal} if it is infinite, or it is finite and no process is activable in the last configuration. All algorithm executions considered in this paper are assumed to be maximal.

A \emph{daemon} is a predicate on the executions. We consider only the most powerful one: the \emph{distributed daemon} that allows all executions described in the previous paragraph.

An algorithm is \emph{self-stabilizing} for a given specification, if there exists a sub-set $\cal L$ of the set of all configurations such that : every execution starting from a configuration of $\cal L$ verifies the specification  (\emph{correctness})  and  starting from any configuration, every execution reaches a configuration of $\cal L$   (\emph{convergence}). $\cal L$ is called the set of \emph{legitimate configurations}.  
A configuration is \emph{stable} if no process is activable in the configuration. 
Both algorithms presented here, are \emph{silent}, meaning that once the algorithm stabilized, no process is activable. In other words, all executions of a silent algorithm are finite and end in a stable configuration.
Note the difference with a non silent self-stabilizing algorithm that has at least one infinite execution with a suffix only containing legitimate configurations, but not stable ones. 

We consider the following matching algorithm given by Manne \emph{et  al.}  \cite{ManneMPT11}.  This algorithm, denoted $\mathcal{M}^+$, computes a maximal matching that is a $\frac{2}{3}$-approximation of the maximum matching in $2^{O(n)}$ moves.  However, the complexity tightness was not proved. In this paper, we exhibit a sub-exponential execution of this matching algorithm.

\section{Algorithm $\mathcal{M}^+$ given by Manne \emph{et al.}  \cite{ManneMPT11}}

The algorithm $\mathcal{M}^{+}$ operates on an undirected graph $G=(V,E)$, where every node $v \in V$ has a unique identifier.  $\mathcal{M}^{+}$ assumes that there exists an underlying maximal matching algorithm, which has reached a stable configuration where a stable maximal matching $M$ has been built. Based on $M$, $\mathcal{M}^{+}$ builds a $\frac{2}{3}-$approximation of the maximum matching. To perform that, nodes search for augmenting paths of length three. 

An augmenting path is a path in the graph, starting and ending in an unmatched node, and where every other edge is either unmatched or matched; \emph{i.e.} for each consecutive pair of edges, exactly one of them must belong to the matching Let us consider the example in Figure \ref{fig:augm-path}.(a). In this figure, $v$ and $u$ are matched nodes and $x, y$ are unmatched nodes. The path $(y, u, v, x)$ is a 3-augmenting path.

Once an augmenting path is detected, nodes rearrange the matching accordingly, \emph{i.e.} transform  this path with one matched edge into a path with two matched edges  (see Figure~\ref{fig:augm-path}.(b)). 
This transformation leads to the deletion  of the augmenting path and increases by one the cardinality of the matching. The algorithm will stabilize when there are no augmenting paths of length three left.
Thus  the hypothesis of Karps's theorem \cite{Karp} eventually holds, giving a $\frac{2}{3}-$approximation of the maximum matching.

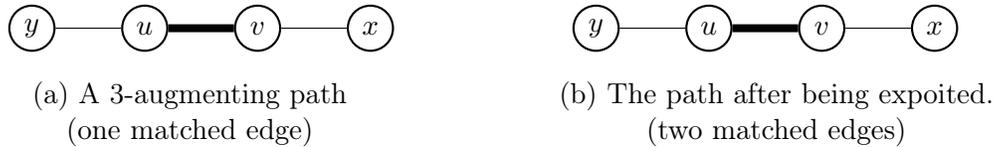
\begin{figure}[!hbt]
 \centering
\begin{tikzpicture}[scale=0.6]
	\tikzstyle{legend}=[rectangle, thin,minimum width=2.5cm, minimum height=1.2cm,black]
	\tikzstyle{node}=[draw,circle,minimum size=0.6cm,thick,inner sep=0pt]
	\tikzstyle{matched edge} = [draw,line width=3pt,-,black]
	\tikzstyle{edge} = [draw,-,black]
	\tikzstyle{pointer} = [draw, thick,->,>=stealth,black,shorten >=1pt]
	    \node[node] (v0) at  (0,0)  {$y$};
	    \node[node] (v1) at  (2.5,0)  {$u$};
	    \node[node] (v2) at  (5,0) {$v$};
	    \node[node] (v3) at  (7.5,0) {$x$};
	    \draw[matched edge] (v1) -- (v2);
	    \draw[edge] (v0) -- (v1);
	    \draw[edge] (v2) -- (v3);

	    \node[node] (v5) at  (0+12.5,0)  {$y$};
	    \node[node] (v6) at  (2.5+12.5,0)  {$u$};
	    \node[node] (v7) at  (5+12.5,0) {$v$};
	    \node[node] (v8) at  (7.5+12.5,0) {$x$};
	    \draw[matched edge] (v6) -- (v7);
	    \draw[edge] (v5) -- (v6);
	    \draw[edge] (v7) -- (v8);

\node[legend] (l1) at (3.5,-1.5) { (a) A 3-augmenting path};
\node[legend] (l1) at (3.5,-2.3) { (one matched edge)};

\node[legend] (l2) at (3.5+13,-1.5) { (b) The path after being expoited.};
\node[legend] (l2) at (3.5+13,-2.3) { (two matched edges)};

\end{tikzpicture}
 \caption{How to exploit a 3-augmenting path ?}
\label{fig:augm-path}
 \end{figure}

The underlying stable maximal matching $M$ is locally expressed by variables $m_v$ for each node $v$. These variables are defined as follows:\\
\indent $ \forall v \in V : (m_v=null) \Leftrightarrow( \forall (a,b) \in M,  a\neq v \land b\neq v )$ -- In this case, $v$ is called a \emph{single node} and we note $v\in \sigma(V)$.\\
\indent $ \forall v \in V : (\exists u \in V,\  m_v = u ) \Leftrightarrow( (v,u) \in M)$ -- In this case, $v$ is called a \emph{matched node} and we note $v\in \mu(V)$.\\

In Algorithm $\mathcal{M}^{+}$,  node $v$ keeps track of four variables, the pointer $p_v$ is used to define the final matching.  The variables $\alpha_v, \beta_v$ are used  to detect augmenting path and contains neighbors  of  $v$ that are single. Also, $s_v$ is a boolean variable  used for the augmenting path transformation.

Thus two neighboring nodes $v, u$ are matched in the final stable solution if and only if either 
$(p_v =u \land p_u =v)$ or if $(p_v =null \land p_u =null \land m_v =u \land m_u =v)$.\\

\noindent For each edge $(v,u)$ in $M$, matched nodes $v$ and $u$ are going to:
\begin{enumerate}
\item \textbf{Detect} augmenting path:  first, every pair of matched nodes $v,u$  will try to find single neighbors to which they can rematch. These single neighbors have to be \emph{available}, meaning they should not be involved in another augmenting path exploitation, \emph{i.e.} a single node $x$ is available if $p_x = null$. We will say that $x$ is a \emph{candidate} for $v$ if $x$ is an available single neighbor of $v$. Moreover $v$ and $u$ have to have a sufficient number of candidates to detect a 3-augmenting path: each node should have at least one candidate and the sum of the number of candidates for $v$ and $u$ should be at least 2. The \emph{BestRematch} predicate is used to compute  candidates in variables $\alpha$ and $\beta$, and the condition below (in \emph{AskFirst} predicate) is used to ensure the number of candidates is sufficiently high. 
($Unique(A)$ returns the number of unique elements in the multi-set $A$).
$$\alpha_u \neq null \wedge \alpha_v \neq null \wedge 2 \leq Unique( \{\alpha_u, \beta_u, \alpha_v, \beta_v\} ) \leq 4$$ 
\item \textbf{Try to exploit} this augmenting path : 
\begin{enumerate}
\item The \emph{AskFirst} node starts: exactly one of $v$ and $u$ will attempt to match with one of its candidates.
\item The \emph{AskSecond} node continues:  only when the first node succeeds will the second node also attempt to match with one of its candidates. 
\begin{enumerate}
\item If this also succeeds, the rematching is considered \emph{complete}. 
\item Otherwise the rematch built by the \emph{AskFirst} node is deleted and candidates $\alpha$ and $\beta$ are computed again, allowing then the detection of new augmenting paths. 
\end{enumerate}
\end{enumerate}
\end{enumerate}

\begin{figure}
{\small
\begin{tabbing}
aaa\=aaa\=aaa\=aaaaaaaa\=\kill
\noindent \textbf{SingleNode} \\
\> \textbf{if} $(p_v = null \wedge \mathit{Lowest} \{u\in \Voisin(v)~|~p_u = v\} \ne null) \vee p_v \notin \mu(\Voisin(v)) \cup \{null\} \vee $\\
\> \> $(p_v \ne null \wedge p_{p_v} \ne v)$\\
\> \textbf{then} $p_v := \mathit{Lowest} \{u\in \Voisin(v)~|~p_u = v\}$\\~\\

\>\>\>\>Algorithm $\mathcal{M}^+$ - Rule for nodes in $\sigma(V)$.\\~\\

\noindent \textbf{Update}\\
\> \textbf{if} $p_v \notin \sigma(\Voisin(v)) \cup \{null\}$ $\vee$ \\
\> \> $((\alpha_v, \beta_v) \ne$
\textbf{BestRematch$(v)$}$\,\wedge\,(p_v=null\,\vee\,p_{p_v} \notin \{v, null\}))$\\
\> \textbf{then} $(\alpha_v, \beta_v) :=$ \textbf{BestRematch$(v)$}\\
\> \> \hspace{0.6em} $(p_v, s_v) := (null, false)$\\~\\

\noindent \textbf{MatchFirst}\\
\textbf{Let} $x=$  \textbf{AskFirst}$(v,m_v)$ \\
\> \textbf{if} $x \ne null \wedge (p_v \ne x \vee
s_v \ne (p_{p_v} = v))$ \\  
\> \textbf{then} $p_v := x$\\
\> \> \hspace{0.6em} $s_v := (p_{p_v} = v)$\\~\\

\noindent \textbf{MatchSecond}\\
\textbf{Let} $y=$  \textbf{AskSecond}$(v, m_v)$ \\
\> \textbf{if} $y \ne null \wedge s_{m_v} = true \wedge p_v \ne y$ \\
\> \textbf{then} $p_v  := y$\\~\\

\noindent \textbf{ResetMatch}\\
\> \textbf{if} \textbf{AskFirst}$(v,m_v) =$ \textbf{AskSecond}$(v,m_v) = null \wedge (p_v, s_v) \ne (null, false)$\\
\> \textbf{then} $(p_v, s_v) := (null, false)$\\~\\

\>\>\>\>Algorithm $\mathcal{M}^+$ - Rules for nodes in $\mu(V)$.\\~\\

\noindent \textbf{BestRematch$(v)$}\\
\> $a := Lowest~ \{u\in \sigma(\Voisin(v)) \wedge (p_u = null \vee p_u = v)\}$\\
\> $b := Lowest ~\{u \in \sigma(\Voisin(v))\setminus\{a\} \wedge (p_u = null \vee p_u = v)\}$\\
\> \textbf{return} $(a, b)$\\~\\

\noindent \textbf{AskFirst}$(v,u)$\\
\> \textbf{if} $\alpha_v \neq null \wedge \alpha_u \neq null \wedge 2 \leq Unique( \{\alpha_v, \beta_v, \alpha_u, \beta_u\} ) \leq 4$ \\
\> \> \textbf{if} $ \alpha_v < \alpha_u \vee ( \alpha_v = \alpha_u \wedge \beta_v = null ) \vee ( \alpha_v = \alpha_u \wedge \beta_u \neq null \wedge v < u )$\\
\> \> \> \textbf{return} $\alpha_v$\\
\> \textbf{return} $null$\\~\\

\noindent \textbf{AskSecond}$(v,u)$\\
\> \textbf{if} $AskFirst(u,v) \neq null$\\
\> \> \textbf{return} $Lowest( \{\alpha_v, \beta_v \} \setminus \{\alpha_u \} )$ \\
\> \textbf{return} $null$\\~\\

\>\>\>\>Algorithm $\mathcal{M}^+$ - Functions
\end{tabbing}
}
\end{figure}

Now, we give a possible execution of Algorithm $\mathcal{M}^+$ under a distributed adversarial daemon.  Fig. \ref{fig:execution}.(a) shows the initial state of the execution. The topology is a path of seven vertices and the  identifiers of the nodes are indicated below. The underlying maximal matching represented by bold edges contains two edges $(24,2)$ and $(9,8)$. Then nodes $24$, $2$, $9$ and $8$ are \emph{matched} nodes (in $\mu(V)$) and nodes $15$, $10$ and $7$ are \emph{single} nodes (in $\sigma(V)$).  We illustrate the use of the \emph{$p$-values} by an arrow and the absence of the arrow means that the \emph{$p$-value} of the node equals to $null$.

At the beginning, there are two augmenting paths. Nodes $9$ and $8$ have already started to exploit their augmenting path. We are going to exhibit an execution where this augmenting path will be reset while the other one will be fully exploited.

In the initial configuration, we assume that all \emph{$\alpha$-values} and \emph{$\beta$-values} are defined as follows: 
$(\alpha_8,\beta_8)=(7,null)$, $(\alpha_9,\beta_9)=(10,null)$ and $(\alpha_{24},\beta_{24})  = (\alpha_2,\beta_2)=(null,null)$.  We also assume all \emph{$s$-values} are well defined: $s_8=true$ and $s_9=s_2=s_{24}=false$. At this step, node $9$ waits for an answer of node $10$. Nodes $2$ and $24$ have two unique candidates for a rematching.

At the beginning of the execution, all \emph{$\alpha$-values} and \emph{$\beta$-values} for all nodes are well defined except for nodes   $2$ and $24$ because \emph{BestRematch$(2)=(10, null)$},  \emph{BestRematch$(24)=(15, null)$}. Nodes $2$ and $24$ 
 execute a \emph{Update} move.  After these moves,  $(\alpha_{24},\beta_{24})=(15,null)$ and $(\alpha_2,\beta_2)=(10,null)$.  

Since $2 \leq Unique( \{\alpha_{2}, \beta_{2}, \alpha_{24}, \beta_{24}\} ) \leq 4$,  nodes $2$ and $24$ detect a $3$-augmenting path and start to exploit this augmenting path.  Since $AskFirst(2,24) = 10$ (which implies $AskFirst(24,2) = null$), node $2$ may execute a \emph{MatchFirst} move. Let us assume it does and then it points to node $10$, as seen in Figure~\ref{fig:execution}.(b). Since both nodes $9$ and $2$ are pointing to node $10$, node $10$ can choose the node to match with from these two nodes. Note that at this point, node $10$ is the only enabled node. Figure~\ref{fig:execution}.(c) shows the configuration obtained after node $10$ makes this choice executing a \emph{SingleNode} move: since $\mathit{Lowest} \{u\in \Voisin(10)~|~p_u = 10\}=2$, node $10$ points to node $2$.  Now, node $24$ is eligible to execute a \emph{MatchSecond} move and, since $BestRematch(9)$ has changed, node $9$ is eligible to execute an \emph{Update} move.

Let us assume node $24$ is activated (see Figure~\ref{fig:execution}.(d) for configuration after this move). It then points to node $15$ thus, node $15$ can accept the proposition executing a \emph{SingleNode} move.  So, it does it and it sets $p_{15} =24$.
 Figure~\ref{fig:execution}.(e) shows after this moves.

Since $p_{10}\neq 9$ and $(\alpha_9, \beta_9)\neq BestRematch(9)$, node $9$ can execute an \emph{Update} move. Figure~\ref{fig:execution}.(f) shows the configuration obtained after this move: $(\alpha_9, \beta_9)=(null,null)$ and $(p_9,s_9)=(null,false)$.  This will cause $AskFirst(8,9)=AskSecond(8,9)=null$.  Then node $8$ executes a \emph{ResetMatch} move (see configuration after this move Figure~\ref{fig:execution}.(g)). This will cause node $7$ to execute a \emph{SingleNode} move and sets $p_7 =null$ as seen in Figure~\ref{fig:execution}.(h). The system then has reached a stable configuration. Thus, the size of the matching is increasing by one and only one augmenting path has been fully exploited.

 \begin{figure}
  \begin{tabular}{p{8cm}p{8cm}}
     \scalebox{0.65}{
\begin{tikzpicture}[xscale = -0.8,yscale=0.8]
	\tikzstyle{legend}=[rectangle, thin,minimum width=2.5cm, minimum height=0.8cm,black]
	\tikzstyle{vertex}=[draw,circle,minimum size=0.3cm,fill=black,inner sep=0pt]
	\tikzstyle{vertex retour}=[draw,rectangle,minimum size=0.25cm,fill=black,inner sep=0pt]
	\tikzstyle{matched edge} = [draw,line width=3pt,-,black]
	\tikzstyle{edge} = [draw,thick,-,black!40]
	\tikzstyle{pointer} = [draw, thick,->,>=stealth,bend left=40,black,shorten >=1pt]


\foreach \x in {3,4,...,9}
{\node[vertex] (v-\x) at (\x,0.5) {};}


\foreach \from/\to in {3/4,4/5,5/6,6/7,7/8,8/9}
{       \draw[edge] (v-\from)--(v-\to);}


\foreach \from/\to in {4/5,7/8}
 {       \draw[matched edge] (v-\from)to (v-\to);}

\foreach \x in {7,8,...,10}
\node[legend] (b1) at (\x-4,0) {\x};
 \node[legend] (b1) at (11-2,0) {15};
\node[legend] (b1) at ( 8,0) {24};
\node[legend] (b1) at ( 7,0) {2};

\draw[pointer] (v-3) to (v-4);
\draw[pointer] (v-4) to (v-3);
\draw[pointer] (v-5) to (v-6);
 
\end{tikzpicture}
} (a) Initial configuration.
 &
  \scalebox{0.65}{
\begin{tikzpicture}[xscale = -0.8,yscale=0.8]
	\tikzstyle{legend}=[rectangle, thin,minimum width=2.5cm, minimum height=0.8cm,black]
	\tikzstyle{vertex}=[draw,circle,minimum size=0.3cm,fill=black,inner sep=0pt]
	\tikzstyle{vertex retour}=[draw,rectangle,minimum size=0.25cm,fill=black,inner sep=0pt]
	\tikzstyle{matched edge} = [draw,line width=3pt,-,black]
	\tikzstyle{edge} = [draw,thick,-,black!40]
	\tikzstyle{pointer} = [draw, thick,->,>=stealth,bend left=40,black,shorten >=1pt]


\foreach \x in {3,4,...,9}
{\node[vertex] (v-\x) at (\x,0.5) {};}


\foreach \from/\to in {3/4,4/5,5/6,6/7,7/8,8/9}
{       \draw[edge] (v-\from)--(v-\to);}


\foreach \from/\to in {4/5,7/8}
 {       \draw[matched edge] (v-\from)to (v-\to);}

\foreach \x in {7,8,...,10}
\node[legend] (b1) at (\x-4,0) {\x};
  \node[legend] (b1) at (11-2,0) {15};
\node[legend] (b1) at ( 8,0) {24};
\node[legend] (b1) at ( 7,0) {2};

\draw[pointer] (v-3) to (v-4);
\draw[pointer] (v-4) to (v-3);
\draw[pointer] (v-5) to (v-6);
 
\draw[pointer] (v-8) to (v-9);
\draw[pointer] (v-9) to (v-8);
\draw[pointer] (v-7) to (v-6);
\draw[pointer] (v-6) to (v-7);

\end{tikzpicture}
}
  
       (e)  Node $15$    executes a \emph{SingleNode} move. \\
  \scalebox{0.65}{
\begin{tikzpicture}[xscale = -0.8,yscale=0.8]
	\tikzstyle{legend}=[rectangle, thin,minimum width=2.5cm, minimum height=0.8cm,black]
	\tikzstyle{vertex}=[draw,circle,minimum size=0.3cm,fill=black,inner sep=0pt]
	\tikzstyle{vertex retour}=[draw,rectangle,minimum size=0.25cm,fill=black,inner sep=0pt]
	\tikzstyle{matched edge} = [draw,line width=3pt,-,black]
	\tikzstyle{edge} = [draw,thick,-,black!40]
	\tikzstyle{pointer} = [draw,  thick,->,>=stealth,bend left=40,black,shorten >=1pt]


\foreach \x in {3,4,...,9}
{\node[vertex] (v-\x) at (\x,0.5) {};}


\foreach \from/\to in {3/4,4/5,5/6,6/7,7/8,8/9}
{       \draw[edge] (v-\from)--(v-\to);}


\foreach \from/\to in {4/5,7/8}
 {       \draw[matched edge] (v-\from)to (v-\to);}

\foreach \x in {7,8,...,10}
\node[legend] (b1) at (\x-4,0) {\x};
 \node[legend] (b1) at (11-2,0) {15};
\node[legend] (b1) at ( 8,0) {24};
\node[legend] (b1) at ( 7,0) {2};

\draw[pointer] (v-3) to (v-4);
\draw[pointer] (v-4) to (v-3);
\draw[pointer] (v-5) to (v-6);
 
\draw[pointer] (v-7) to (v-6);

\end{tikzpicture}
}
     (b) Node $2$   executes a \emph{MatchFirst}
 move. &      \scalebox{0.65}{
\begin{tikzpicture}[xscale = -0.8,yscale=0.8]
	\tikzstyle{legend}=[rectangle, thin,minimum width=2.5cm, minimum height=0.8cm,black]
	\tikzstyle{vertex}=[draw,circle,minimum size=0.3cm,fill=black,inner sep=0pt]
	\tikzstyle{vertex retour}=[draw,rectangle,minimum size=0.25cm,fill=black,inner sep=0pt]
	\tikzstyle{matched edge} = [draw,line width=3pt,-,black]
	\tikzstyle{edge} = [draw,thick,-,black!40]
	\tikzstyle{pointer} = [draw, thick,->,>=stealth,bend left=40,black,shorten >=1pt]


\foreach \x in {3,4,...,9}
{\node[vertex] (v-\x) at (\x,0.5) {};}


\foreach \from/\to in {3/4,4/5,5/6,6/7,7/8,8/9}
{       \draw[edge] (v-\from)--(v-\to);}


\foreach \from/\to in {4/5,7/8}
 {       \draw[matched edge] (v-\from)to (v-\to);}

\foreach \x in {7,8,...,10}
\node[legend] (b1) at (\x-4,0) {\x};
  \node[legend] (b1) at (11-2,0) {15};
\node[legend] (b1) at ( 8,0) {24};
\node[legend] (b1) at ( 7,0) {2};

\draw[pointer] (v-3) to (v-4);
\draw[pointer] (v-4) to (v-3);
 
\draw[pointer] (v-8) to (v-9);
\draw[pointer] (v-9) to (v-8);
\draw[pointer] (v-7) to (v-6);
\draw[pointer] (v-6) to (v-7);

\end{tikzpicture}
}
     (f) Node $9$ executes a \emph{Update} move. \\
\\
     \scalebox{0.65}{
\begin{tikzpicture}[xscale = -0.8,yscale=0.8]
	\tikzstyle{legend}=[rectangle, thin,minimum width=2.5cm, minimum height=0.8cm,black]
	\tikzstyle{vertex}=[draw,circle,minimum size=0.3cm,fill=black,inner sep=0pt]
	\tikzstyle{vertex retour}=[draw,rectangle,minimum size=0.25cm,fill=black,inner sep=0pt]
	\tikzstyle{matched edge} = [draw,line width=3pt,-,black]
	\tikzstyle{edge} = [draw,thick,-,black!40]
	\tikzstyle{pointer} = [draw,  thick,->,>=stealth,bend left=40,black,shorten >=1pt]


\foreach \x in {3,4,...,9}
{\node[vertex] (v-\x) at (\x,0.5) {};}


\foreach \from/\to in {3/4,4/5,5/6,6/7,7/8,8/9}
{       \draw[edge] (v-\from)--(v-\to);}


\foreach \from/\to in {4/5,7/8}
 {       \draw[matched edge] (v-\from)to (v-\to);}

\foreach \x in {7,8,...,10}
\node[legend] (b1) at (\x-4,0) {\x};
  \node[legend] (b1) at (11-2,0) {15};
\node[legend] (b1) at ( 8,0) {24};
\node[legend] (b1) at ( 7,0) {2};

\draw[pointer] (v-3) to (v-4);
\draw[pointer] (v-4) to (v-3);
\draw[pointer] (v-5) to (v-6);
 
\draw[pointer] (v-7) to (v-6);
\draw[pointer] (v-6) to (v-7);
\end{tikzpicture}
}
   (c) Node $10$ executes A
 \emph{SingleNode} move.  
&      \scalebox{0.65}{
\begin{tikzpicture}[xscale = -0.8,yscale=0.8]
	\tikzstyle{legend}=[rectangle, thin,minimum width=2.5cm, minimum height=0.8cm,black]
	\tikzstyle{vertex}=[draw,circle,minimum size=0.3cm,fill=black,inner sep=0pt]
	\tikzstyle{vertex retour}=[draw,rectangle,minimum size=0.25cm,fill=black,inner sep=0pt]
	\tikzstyle{matched edge} = [draw,line width=3pt,-,black]
	\tikzstyle{edge} = [draw,thick,-,black!40]
	\tikzstyle{pointer} = [draw,  thick,->,>=stealth,bend left=40,black,shorten >=1pt]


\foreach \x in {3,4,...,9}
{\node[vertex] (v-\x) at (\x,0.5) {};}


\foreach \from/\to in {3/4,4/5,5/6,6/7,7/8,8/9}
{       \draw[edge] (v-\from)--(v-\to);}


\foreach \from/\to in {4/5,7/8}
 {       \draw[matched edge] (v-\from)to (v-\to);}

\foreach \x in {7,8,...,10}
\node[legend] (b1) at (\x-4,0) {\x};
  \node[legend] (b1) at (11-2,0) {15};
\node[legend] (b1) at ( 8,0) {24};
\node[legend] (b1) at ( 7,0) {2};

\draw[pointer] (v-3) to (v-4);
 
\draw[pointer] (v-8) to (v-9);
\draw[pointer] (v-9) to (v-8);
\draw[pointer] (v-7) to (v-6);
\draw[pointer] (v-6) to (v-7);
\end{tikzpicture}
\label{exemple-chaine-final}
}
 
(g) Node $8$  executes a
    \emph{ResetMatching} move \\
      \scalebox{0.65}{
\begin{tikzpicture}[xscale = -0.8,yscale=0.8]
	\tikzstyle{legend}=[rectangle, thin,minimum width=2.5cm, minimum height=0.8cm,black]
	\tikzstyle{vertex}=[draw,circle,minimum size=0.3cm,fill=black,inner sep=0pt]
	\tikzstyle{vertex retour}=[draw,rectangle,minimum size=0.25cm,fill=black,inner sep=0pt]
	\tikzstyle{matched edge} = [draw,line width=3pt,-,black]
	\tikzstyle{edge} = [draw,thick,-,black!40]
	\tikzstyle{pointer} = [draw, thick,->,>=stealth,bend left=40,black,shorten >=1pt]


\foreach \x in {3,4,...,9}
{\node[vertex] (v-\x) at (\x,0.5) {};}


\foreach \from/\to in {3/4,4/5,5/6,6/7,7/8,8/9}
{       \draw[edge] (v-\from)--(v-\to);}


\foreach \from/\to in {4/5,7/8}
 {       \draw[matched edge] (v-\from)to (v-\to);}

\foreach \x in {7,8,...,10}
\node[legend] (b1) at (\x-4,0) {\x};
  \node[legend] (b1) at (11-2,0) {15};
\node[legend] (b1) at ( 8,0) {24};
\node[legend] (b1) at ( 7,0) {2};

\draw[pointer] (v-3) to (v-4);
\draw[pointer] (v-4) to (v-3);
\draw[pointer] (v-5) to (v-6);
 
\draw[pointer] (v-8) to (v-9);
\draw[pointer] (v-7) to (v-6);
\draw[pointer] (v-6) to (v-7);

\end{tikzpicture}
}
  
       (d)  Node $24$  executes a
    \emph{MatchSecond} move.   &   \scalebox{0.65}{
\begin{tikzpicture}[xscale = -0.8,yscale=0.8]
	\tikzstyle{legend}=[rectangle, thin,minimum width=2.5cm, minimum height=0.8cm,black]
	\tikzstyle{vertex}=[draw,circle,minimum size=0.3cm,fill=black,inner sep=0pt]
	\tikzstyle{vertex retour}=[draw,rectangle,minimum size=0.25cm,fill=black,inner sep=0pt]
	\tikzstyle{matched edge} = [draw,line width=3pt,-,black]
	\tikzstyle{edge} = [draw,thick,-,black!40]
	\tikzstyle{pointer} = [draw,  thick,->,>=stealth,bend left=40,black,shorten >=1pt]


\foreach \x in {3,4,...,9}
{\node[vertex] (v-\x) at (\x,0.5) {};}


\foreach \from/\to in {3/4,4/5,5/6,6/7,7/8,8/9}
{       \draw[edge] (v-\from)--(v-\to);}


\foreach \from/\to in {4/5,7/8}
 {       \draw[matched edge] (v-\from)to (v-\to);}

\foreach \x in {7,8,...,10}
\node[legend] (b1) at (\x-4,0) {\x};
  \node[legend] (b1) at (11-2,0) {15};
\node[legend] (b1) at ( 8,0) {24};
\node[legend] (b1) at ( 7,0) {2};

 
\draw[pointer] (v-8) to (v-9);
\draw[pointer] (v-9) to (v-8);
\draw[pointer] (v-7) to (v-6);
\draw[pointer] (v-6) to (v-7);
\end{tikzpicture}
\label{exemple-chaine-final}
}
   (h)  Node $7$ executes a
 \emph{SingleNode} move. \\
   \label{fig:exemple-chaine} 
 \end{tabular}
 \caption{An execution of Algorithm $\mathcal{M}^+$ }
\label{fig:execution}
 \end{figure}
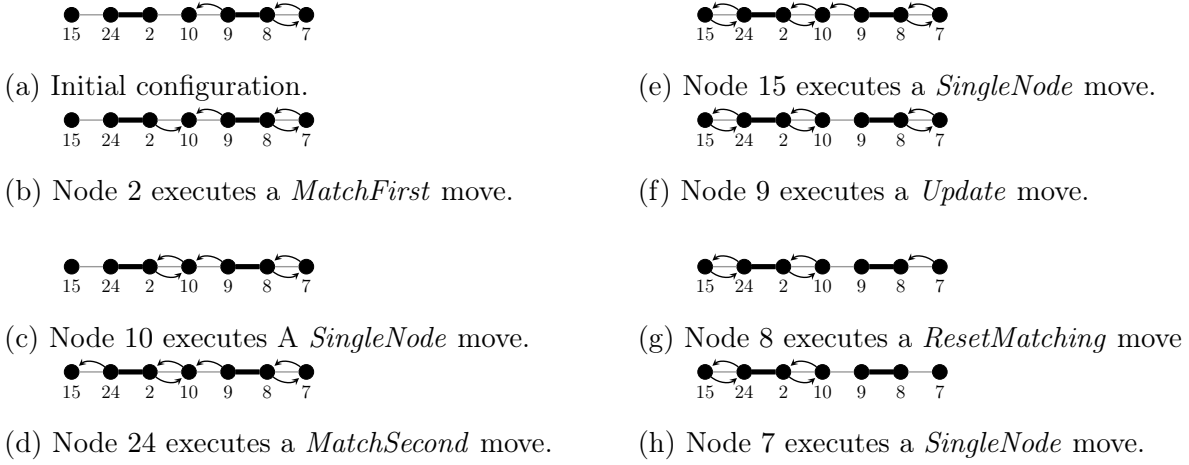

\subsection{Description of Algorithm $\mathcal{M}^+$}

\begin{definition}[Edges in state \emph{On} or in state  \emph{Off}] \label{def:edge}
Let $e=(u,v)$ be an edge in the maximal matching $M$.
Let $x$ (resp. $y$) be the single node adjacent to $u$ (resp. $v$). 
Edge $e$ is said to be   \emph{in state  Off}  if $p_{u}=\bot$, $p_{v}=\bot$, $p_x=\bot$ and $p_y=\bot$.   
Moreover edge $e$ is said to be \emph{in state  On}  if  $p_{u}=y,p_{x}=v$, $p_{v}=x$ and $p_y = null$. Edge $e$ is said to be \emph{in state Almost On}  if  $p_{u}=y,p_{x}=v$, $p_{v}=x$ and $p_y \notin\{null,u\}$. 
\end{definition}

\begin{figure}[!hbt]
 \centering
 \scalebox{0.5}{
\begin{tikzpicture}[scale=0.6]
	\tikzstyle{legend}=[rectangle, thin,minimum width=2.5cm, minimum height=1.2cm,black]
	\tikzstyle{node}=[draw,circle,minimum size=0.6cm,thick,inner sep=0pt]
	\tikzstyle{matched edge} = [draw,line width=3pt,-,black]
	\tikzstyle{edge} = [draw,-,black]
	\tikzstyle{pointer} = [draw, thick,->,>=stealth,black,shorten >=1pt]
	    \node[node] (v0) at  (0,0)  {$y$};
	    \node[node] (v1) at  (2.5,0)  {$u$};
	    \node[node] (v2) at  (5,0) {$v$};
	    \node[node] (v3) at  (7.5,0) {$x$};
	    \draw[matched edge] (v1) -- (v2);
	    \draw[edge] (v0) -- (v1);
	    \draw[edge] (v2) -- (v3);

	    \node[node] (v5) at  (0+12.5,0)  {$y$};
	    \node[node] (v6) at  (2.5+12.5,0)  {$u$};
	    \node[node] (v7) at  (5+12.5,0) {$v$};
	    \node[node] (v8) at  (7.5+12.5,0) {$x$};
	    \draw[matched edge] (v6) -- (v7);
	    \draw[edge] (v5) -- (v6);
	    \draw[edge] (v7) -- (v8);
	    \draw[pointer] (v8) to  [out=-125,in=-45] (v7);
	    \draw[pointer] (v6) to [out=-125,in=-45] (v5);
	    \draw[pointer] (v7) to [out=55,in=135] (v8);

\node[node] (v15) at  (0+27,0)  {$y$};
	    \node[node] (v16) at  (2.5+27,0)  {$u$};
	    \node[node] (v17) at  (5+27,0) {$v$};
	    \node[node] (v18) at  (7.5+27,0) {$x$};
	    \node[node,  dashed] (v19) at  (-2.5+27,0) {}; 

	    \draw[matched edge] (v16) -- (v17);
	    \draw[edge] (v15) -- (v16);
	    \draw[edge] (v17) -- (v18);
	    \draw[edge,densely dashed] (v15) -- (v19);

	    \draw[pointer] (v18) to  [out=-125,in=-45] (v17);
	    \draw[pointer] (v16) to [out=-125,in=-45] (v15);
	    \draw[pointer] (v17) to [out=55,in=135] (v18);
	    \draw[pointer] (v15) to [out=-125,in=-45] (v19);

\node[legend] (l1) at (3.5,-1.5) {State Off};
\node[legend] (l2) at (3.5+12.5,-1.5) {State On};
\node[legend] (l2) at (3.5+27,-1.5) {State almost On};

\end{tikzpicture}
}
 \caption{Edges  in state Off and On :
the arrows drawn represent the local variables $p_{\cdot}$ of nodes.}
\label{fig:onoff}
 \end{figure}
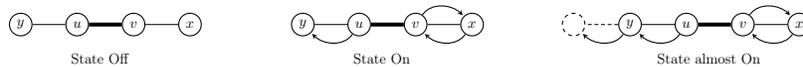

 An example of Definition~\ref{def:edge} can be seen in Figure \ref{fig:onoff}. Moreover, in Figure \ref{fig:execution}.(a), edge $(24,2)$ is in state \emph{Off} while  edge $(9,8)$ is in state \emph{On}.

The states of edges represent the detection process step of the $3$-augmenting path. Now, we will exhibit an execution to switch edge $(u,v)$ from state \emph{Off} to state \emph{On}.

\begin{lemma}\label{lem:offTOon}
Let $e=(u,v)$ be an edge in the maximal matching $M$ and in state Off.  Let $y$ (resp. $x$) be the single node adjacent to $u$  (resp. $v$) with $Ident(x) <Ident(y)$. If $y= Lowest\{BestRematch(u)\}$, $x=Lowest\{BestRematch(v)\}$ and $v \leq Lowest \{w\in
  \Voisin(x)  |p_x =w\}$, then there exists a finite execution to switch edge $(u,v)$ from state \emph{Off} to state \emph{On}.  Moreover the only nodes executing a move in this execution are $ \{x,u,v\}$. 
\end{lemma}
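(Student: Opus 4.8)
The plan is to exhibit an explicit three-step execution, activating the nodes $v$, $x$, $u$ in that order, and check at each step that exactly the intended rule is enabled and that its command produces the next configuration. We start from a configuration where $e=(u,v)$ is in state \emph{Off}, so $p_u=p_v=p_x=p_y=\bot$, and we are told $y=\mathit{Lowest}\{\mathbf{BestRematch}(u)\}$, $x=\mathit{Lowest}\{\mathbf{BestRematch}(v)\}$, $\mathit{Ident}(x)<\mathit{Ident}(y)$, and $v\le\mathit{Lowest}\{w\in\Voisin(x)\mid p_x=w\}$ (this last hypothesis is what will make $x$'s \textbf{SingleNode} rule pick $v$ once $v$ points to $x$). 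First I would record that in state \emph{Off} none of the rules of $u$, $v$, $x$ is enabled except possibly \textbf{Update} on $u$ or $v$; since the hypotheses say the $\alpha,\beta$ values already equal $\mathbf{BestRematch}$, I will assume the $\alpha,\beta$ are well-defined (this is the natural reading of "in state Off'' together with the hypotheses), so \textbf{Update} is disabled and the only enabled rule among $\{u,v\}$ is \textbf{MatchFirst} on $v$: indeed $\alpha_u,\alpha_v\ne null$ and $2\le Unique(\{\alpha_v,\beta_v,\alpha_u,\beta_u\})\le4$ must be checked from the hypotheses (here $\alpha_v=x$, $\alpha_u=y$, $x\ne y$ gives at least $2$ unique values), and the tie-breaking in \textbf{AskFirst} gives $\mathbf{AskFirst}(v,u)=\alpha_v=x$ because $\alpha_v=x<y=\alpha_u$.

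Step 1: activate $v$ via \textbf{MatchFirst}. Since $x=\mathbf{AskFirst}(v,m_v)\ne null$ and currently $p_v=\bot\ne x$, the guard holds; the command sets $p_v:=x$ and $s_v:=(p_{p_v}=v)=(p_x=v)$, which is $false$ since $p_x=\bot$ still. Step 2: now $x$ becomes enabled via \textbf{SingleNode}, because $p_x=\bot$ and $\mathit{Lowest}\{w\in\Voisin(x)\mid p_w=x\}\ne null$ (it contains $v$). The command sets $p_x:=\mathit{Lowest}\{w\in\Voisin(x)\mid p_w=x\}$; I must argue this lowest value is exactly $v$. Here I use the hypothesis $v\le\mathit{Lowest}\{w\in\Voisin(x)\mid p_x=w\}$ — wait, that set is over the \emph{old} $p_x$; more carefully, the only neighbor of $x$ pointing to $x$ right now is $v$ (all other single/matched neighbors of $x$ had $p$-values that, by the \emph{Off} assumption on $e$ and the minimality hypotheses, do not point to $x$, or if some do, $v$ is the lowest among them by the stated inequality), so $p_x:=v$. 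Step 3: now $u$ is enabled via \textbf{MatchFirst}: $\mathbf{AskFirst}(u,m_u)$ — here $m_u=v$, and we need $\mathbf{AskFirst}(u,v)$. But $\mathbf{AskFirst}(v,u)=x\ne null$, and the \textbf{AskFirst} tie-break is exclusive, so $\mathbf{AskFirst}(u,v)$... this is the delicate point: in fact it should be $v$ that played \emph{AskFirst} and $u$ that plays \emph{AskSecond}. So Step 3 is $u$ executing \textbf{MatchSecond}: $y=\mathbf{AskSecond}(u,v)=\mathit{Lowest}(\{\alpha_u,\beta_u\}\setminus\{\alpha_v\})=\mathit{Lowest}(\{y,\beta_u\}\setminus\{x\})=y$ (since $x\ne y$ and, if $\beta_u=x$, then $\mathit{Lowest}$ of the remaining singleton is $y$; if $\beta_u\ne x$ we still get $y$ as it is the lowest of $\mathbf{BestRematch}(u)$), and $\mathbf{AskFirst}(v,u)\ne null$ so the guard of \textbf{MatchSecond} needs $s_{m_u}=s_v=true$. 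Hence there must be an intermediate activation of $v$ (again via \textbf{MatchFirst}) to update $s_v$ to $true$ once $p_x=v$: after Step 2, $p_{p_v}=p_x=v$, so $s_v\ne(p_{p_v}=v)$ is now $false\ne true$, enabling \textbf{MatchFirst} on $v$ with $x=\mathbf{AskFirst}(v,m_v)$ unchanged; firing it sets $s_v:=true$. Then $u$ fires \textbf{MatchSecond}, setting $p_u:=y$. The final configuration has $p_v=x$, $p_x=v$, $p_u=y$, $p_y=\bot$ — i.e., state \emph{On}. So the actual move sequence is $v$ (MatchFirst), $x$ (SingleNode), $v$ (MatchFirst, updating $s_v$), $u$ (MatchSecond), and the only nodes moving are $\{x,u,v\}$, as claimed.

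The main obstacle I anticipate is bookkeeping the guard conditions precisely, especially: (i) confirming that after each move no \emph{other} rule of the three nodes (or of $y$) becomes spuriously enabled in a way that would let the daemon deviate — but since we get to choose the daemon's schedule, we only need that at each step \emph{our} chosen rule is enabled, which is easier; (ii) verifying that $\mathit{Lowest}\{w\in\Voisin(x)\mid p_w=x\}=v$ at Step 2, which is exactly where the hypothesis $v\le\mathit{Lowest}\{w\in\Voisin(x)\mid p_x=w\}$ is used and needs a careful reading of which set is meant; and (iii) tracking the $s_v$ value through the two \textbf{MatchFirst} activations of $v$ and making sure \textbf{MatchSecond} of $u$ sees $s_{m_u}=true$. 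Once these are nailed down, the conclusion that the only active nodes are $\{x,u,v\}$ is immediate from the rules fired, and finiteness is immediate since the execution has at most four moves.
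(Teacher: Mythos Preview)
Your approach is essentially the same as the paper's: exhibit the explicit sequence of rule firings on $v$, $x$, $u$ and check the guards. Two small complementary differences are worth noting. First, the paper does \emph{not} assume that $\alpha_u,\alpha_v$ already store the correct values; it begins with conditional \textbf{Update} moves on $u$ and on $v$ to force $(\alpha_u,\beta_u)=(y,\bot)$ and $(\alpha_v,\beta_v)=(x,\bot)$ (your reading of ``state \emph{Off}'' as implying this is not warranted by Definition~\ref{def:edge}, which constrains only the $p$-values). Second, you are more careful than the paper on the $s_v$ bookkeeping: you correctly observe that after the first \textbf{MatchFirst} of $v$ one has $s_v=\mathit{false}$, so an additional \textbf{MatchFirst} of $v$ (enabled because now $s_v\neq(p_{p_v}=v)$) is needed to set $s_v:=\mathit{true}$ before $u$ can fire \textbf{MatchSecond}; the paper's proof elides this step. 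Both refinements keep the set of moving nodes inside $\{x,u,v\}$, so the conclusion is unaffected.
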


\begin{proof} We describe a finite execution to switch edge $(u,v)$ from state Off to state On.  Nodes $u$ and $v$ belong to a  $3$-augmenting path since $p_x=p_y=null$. If $\alpha_u\neq y$, then  node $u$ executes a \emph{Update} move :
$(\alpha_u,\beta_u)=(y,null)$ because $p_y=null$. If $\alpha_v\neq x$, then  node $v$ executes a \emph{Update} move :
$(\alpha_v,\beta_v)=(x,null)$ because $p_x=null$ and  $v = Lowest \{w\in \Voisin(x)  |p_x =w\}$.

Thus,  the variables $\alpha_u$ and $\alpha_v$ are well defined : $\alpha_u= y$ and $\alpha_v= x$. $Ident(x) <Ident(y)$ implies $AskFirst(v,u) = x$ and $AskFirst(u,v) = null$ because $2 \leq Unique( \{\alpha_{u}, \beta_{u}, \alpha_{v}, \beta_{v}\} ) \leq 4$. Thus node $v$ executes a \emph{MatchFirst} move: $p_v=x$. Since $v=Lowest\{w\in \Voisin(x)  |p_w =x\}$ by the hypothesis of this lemma, node $x$ chooses node $v$ to match with by executing a \emph{SingleNode} move.  Finally, node $u$ is eligible to execute a \emph{MatchSecond} move and it then points to node $y$ (because $y= Lowest\{BestRematch(u)\}$).
\end{proof}

Note that  Figures \ref{fig:execution}.(a)- \ref{fig:execution}.(d) represent an execution to switch edge $(24,2)$ from state emph{Off} to state \emph{On}: nodes $15$ and $10$ are respectively nodes   $y$ and $x$ for the execution of Lemma~\ref{lem:onTOoff}. Now,   Now, we will exhibit an execution to switch edge $(u,v)$ from state \emph{Almost On} to state \emph{Off}.

\begin{lemma}\label{lem:onTOoff}
Let $e=(u,v)$ be an edge in the maximal matching $M$ and in state \emph{Almost On}.  Let $y$ (resp. $x$) be the single node adjacent to $u$ (resp. $v$) with $Ident(x) <Ident(y)$.  There exists a finite execution to switch edge $(u,v)$ from state \emph{Almost On} to state \emph{Off}.  Moreover the only nodes executing a move in this execution are $ \{x,y,u,v\}$.
\end{lemma}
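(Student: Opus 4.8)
The plan is to reverse-engineer the ``On $\to$ Off'' transition by tracing, step by step, how the four nodes $\{x,y,u,v\}$ can be brought back to their $\bot$-pointer state. We start from an edge $e=(u,v)$ in state \emph{Almost On}: by Definition~\ref{def:edge} this means $p_u=y$, $p_x=v$, $p_v=x$ and $p_y\notin\{null,u\}$. So $y$ already points somewhere other than $u$ (it has ``defected''), and we must propagate this defection back through $v$, then $x$, then $u$, switching each pointer to $null$ in the right order so that no intermediate node becomes re-enabled to point back into $e$.

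The key steps, in order. First I would look at node $u$: since $p_u=y$ but $p_y\ne u$, the MatchSecond guard is no longer satisfied for $u$ (the rematch $u$ built is now invalid), and crucially $s_{m_v}$ / the AskSecond mechanism will fail — concretely, once $p_y\notin\{null,u\}$, $y$ is no longer a candidate in $\mathbf{BestRematch}(u)$, so either an \emph{Update} move or a \emph{ResetMatch} move on $u$ is enabled; I would argue that $u$ is forced to recompute and, because its only single neighbor $y$ is now unavailable ($p_y\neq null$ and $p_y\neq u$), $u$ sets $p_u:=null$. Second, once $p_u=null$, the pair $(u,v)$ no longer has a valid first/second structure, so looking at $v$: the \emph{AskFirst}/\emph{AskSecond} predicates for $(v,u)$ now return $null$ (since $\alpha_u$ has become $null$), which enables a \emph{ResetMatch} (or \emph{Update}) move on $v$ setting $(p_v,s_v):=(null,false)$. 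Third, with $p_v=null$, node $x$ sees that no neighbor points to it with lower identifier forcing it to re-point — in fact $p_x=v$ but now $p_v\neq x$, so the \emph{SingleNode} guard ($p_v\ne null\wedge p_{p_v}\ne v$) fires and $x$ sets $p_x:=\mathit{Lowest}\{w\in\Voisin(x)\mid p_w=x\}=null$. Finally I would check that $y$ itself: since $p_y\notin\{null,u\}$, the single node $y$ must either already be consistently matched elsewhere or becomes enabled to reset; in the ``$\to$ Off'' scenario we want $p_y=null$, so I would note $y$ executes a \emph{SingleNode} move setting $p_y:=null$ (its target is gone or was never valid). After these moves all four of $p_u,p_v,p_x,p_y$ equal $null$, which is exactly state \emph{Off}, and no node outside $\{x,y,u,v\}$ was ever touched.

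The main obstacle I anticipate is bookkeeping the \emph{enabledness} at each intermediate configuration: I must verify not only that the intended move is enabled, but that executing the moves in the claimed order does not (a) re-enable an already-processed node to point back into $e$, nor (b) leave some node in $\{x,y,u,v\}$ with a pointer that forces a fresh detection of the $3$-augmenting path on $e$. In particular the interplay between the $\alpha,\beta$ variables (updated only by \emph{Update}) and the $p,s$ variables (reset by \emph{ResetMatch}) is delicate: I would need the precondition $Ident(x)<Ident(y)$ and the ordering clauses inside \emph{AskFirst}/\emph{AskSecond} to guarantee that once $\alpha_u$ or $\alpha_v$ is cleared, the predicate $2\le Unique(\{\alpha_u,\beta_u,\alpha_v,\beta_v\})\le 4$ fails, so the edge stays ``dormant.'' A secondary subtlety is the starting value of $p_y$: since \emph{Almost On} only says $p_y\notin\{null,u\}$, I may need to case-split on whether $p_y$ points to a genuine neighbor of $y$ or is ``stale'', but in either case the \emph{SingleNode} rule for $y$ cleans it up, so the case analysis collapses. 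Once these guards are checked the execution is finite (four moves, or a constant number) and confined to $\{x,y,u,v\}$, as required.
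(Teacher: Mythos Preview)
Your approach is essentially the same as the paper's: both proofs trace the cascade $u \to v \to x$, with $u$ executing an \emph{Update} move (since $p_y\notin\{null,u\}$ makes $(\alpha_u,\beta_u)\ne \textbf{BestRematch}(u)$, after which $\alpha_u=null$), then $v$ executing \emph{ResetMatch} (since $\alpha_u=null$ kills $\textbf{AskFirst}(v,u)$ and $\textbf{AskSecond}(v,u)$), then $x$ executing \emph{SingleNode} (since now $p_{p_x}\ne x$).

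There are two minor discrepancies in the final step. First, the paper closes with an additional \emph{Update} move on $v$ to reset $(\alpha_v,\beta_v)$ to $(null,null)$; you anticipate exactly this concern in your ``obstacle'' paragraph but do not list the move explicitly. Second, you include a \emph{SingleNode} move on $y$ to force $p_y:=null$, whereas the paper's proof never touches $y$. Your instinct here is understandable (the definition of state \emph{Off} formally requires $p_y=\bot$), but your argument that ``in either case the \emph{SingleNode} rule for $y$ cleans it up'' is not sound in general: if $p_y=z$ with $z\in\mu(\Voisin(y))$ and $p_z=y$, then all three disjuncts of the \emph{SingleNode} guard are false and $y$ is not activable. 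The paper sidesteps this by simply not moving $y$ at all in the proof of this lemma; in the applications (Theorem~\ref{th:execution:+1}), $p_y$ is reset later by a separate invocation of the lemma on the adjacent $\carre$-edge.
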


\begin{proof} 
A finite execution to switch edge $(u,v)$ from state \emph{Almost On} to state \emph{Off} is described. Since edge $(u,v)$ is in state \emph{Almost On},  $p_{y}\not\in \{u,null\}$ and so $(\alpha_u, \beta_u)\neq BestRematch(u)$.  Node $u$ executes a \emph{Update} move. After this move, $(p_u,s_u)=(null,false)$.  The fact that $\alpha_u=null$ will cause $AskFirst(v,u)=AskSecond(v,u)=null$. Then node $v$ executes a \emph{ResetMatch} move: $p_v=null$. Then node $x$ is activated by executing a \emph{SingleNode} move and it sets $p_x=null$. Finally, node $v$ can execute  a \emph{Update} move, and thus $(\alpha_v, \beta_v)=(null,null)$. 
\end{proof}

Note that in Figure~\ref{fig:execution}.(e),  edge $(9,8)$ is in state \emph{Almost On}. Figures~\ref{fig:execution}.(f)-\ref{fig:execution}.(h) represent the execution of Lemma~\ref{lem:onTOoff} in order that  edge $(9,8)$ will be in state
\emph{Off}.

\subsection{Complexity of Algorithm $\mathcal{M}^+$}

We  describe an execution corresponding to count from 0 to $2^{N}-1$, where $N$ is an arbitrary integer.
This execution occurs in a graph denoted by $G_N$ with $\Theta(N^2)$ nodes.  $G_N$ is composed in $N$
sub-graphs, each of them representing a bit.  The whole graph then represents an integer, coding from theses $N$ bits. 
$G_N$ has 2 kind of nodes: the nodes represented by circles ($\rond\,$-nodes)  and those represented by squares ($\carre\,$-nodes). The $\rond\,$-nodes are used to store bits value and hence an integer. The $\carre\,$-nodes are used to implement the ``$+1$'' operation as we count from $0$ to $2^N-1$.

\paragraph{Example:}
As an illustration, graph $G_4$ is shown in Figure~\ref{figure-010}. In this example, the bold edges are those that belong to the maximal matching $M$ computed by algorithm $\mathcal{M}$ and arrows represent the local variable $p$ of the $2/3$-approximation algorithm. A node having no outgoing arrow has its $p$ variable equals to \emph{null}. 

 \begin{figure}[!hbt]
 \centering
 \scalebox{0.6}{
\begin{tikzpicture}[xscale = -0.8,yscale=0.8]
	\tikzstyle{legend}=[rectangle, thin,minimum width=2.5cm, minimum height=0.8cm,black]
	\tikzstyle{vertex}=[draw,circle,minimum size=0.3cm,fill=black,inner sep=0pt]
	\tikzstyle{vertex retour}=[draw,rectangle,minimum size=0.25cm,fill=black,inner sep=0pt]
	\tikzstyle{matched edge} = [draw,line width=3pt,-,black]
	\tikzstyle{edge} = [draw,thick,-,black!40]
	\tikzstyle{pointer} = [draw, thick,->,>=stealth,bend left=40,black,shorten >=1pt]

\foreach \x in {3,4,5,6,9,10,11,12,15,16,17,18,21,22,23,24}
{\node[vertex] (v-\x) at (\x,0.5) {};}

\foreach \from/\to in {3/4,5/6,9/10,11/12,15/16,17/18,21/22,23/24}
{       \draw[edge] (v-\from)--(v-\to);}
\foreach \from/\to in {4/5,10/11,16/17,22/23}
{       \draw[matched edge] (v-\from)to (v-\to);}

\foreach \x in {7,8,...,10}
\node[legend] (b1) at (\x-4,0) {\x};
\foreach \x in {11,12,...,14}
\node[legend] (b1) at (\x-2,0) {\x};
\foreach \x in {15,16,...,18}
\node[legend] (b1) at (\x,0) {\x};

\foreach \x in {19,20,...,22}
\node[legend] (b1) at (\x+2,0) {\x};


\node[vertex retour] (r-32) at (20,-1) {};  \node[legend] (b1) at (20,-1.5) {28};
\node[vertex retour] (r-31) at (20,-2) {}; \node[legend] (b1) at (20,-2.5) {27};
\node[vertex retour] (r-30) at (20,-3) {}; \node[legend] (b1) at (20,-3.5) {26};
\node[vertex retour] (r-21) at (14,-1) {}; \node[legend] (b1) at (14,-1.5) {25};
\node[vertex retour] (r-20) at (14,-2) {}; \node[legend] (b1) at (14,-2.5) {24};
\node[vertex retour] (r-10) at ( 8,-1) {}; \node[legend] (b1) at ( 8,-1.5) {23};
\node[vertex retour] (l-32) at (19,-1) {}; \node[legend] (b1) at (19,-1.5) {6};
\node[vertex retour] (l-31) at (19,-2) {}; \node[legend] (b1) at (19,-2.5) {5};
\node[vertex retour] (l-30) at (19,-3) {};  \node[legend] (b1) at (19,-3.5) {4};
\node[vertex retour] (l-20) at (13,-2) {}; \node[legend] (b1) at (13,-2.5) {2};
\node[vertex retour] (l-10) at ( 7,-1) {};   \node[legend] (b1) at ( 7,-1.5) {1};

\node[vertex retour] (l-21) at (13,-1) {};  \node[legend] (b1) at ( 13,-1.5) {3};

\foreach \x in {30,31,32,21,20,10}
{\draw[matched edge] (r-\x)--(l-\x);}
\draw[edge] (v-9)--(r-10);
\foreach \x in {21,20}
{\draw[edge] (v-15)--(r-\x);}
\foreach \x in {31,30,32}
{\draw[edge] (v-21)--(r-\x);}
\foreach \x in {10,20,30}
{\draw[edge] (v-6)--(l-\x);}
\foreach \x in {21,31}
{\draw[edge] (v-12)--(l-\x);}
\draw[edge] (v-18)--(l-32);
\foreach \x in {6.5,12.5,18.5}
{\draw [loosely dotted,thick] (\x,-4) -- (\x,2) ;}

\draw[pointer] (v-9) to (v-10);
\draw[pointer] (v-10) to (v-9);
\draw[pointer] (v-11) to (v-12);
\node[legend] (b1) at (4,1.5) {Bit 0 = 0};
\node[legend] (b2) at (10,1.5) {Bit 1 =1};
\node[legend] (b3) at (16,1.5) {Bit 2 =0};
\node[legend] (b4) at (22,1.5) {Bit 3 =0};

\end{tikzpicture}
}
 \caption{Graph   $G_4$ encoding  0010}
\label{figure-010}
 \end{figure}
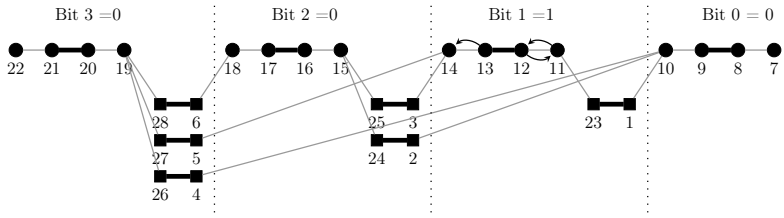 

As we said, the $\rond$-nodes are used to encode the $N$ bits. Each bit $i$ is encoded with the local state of the $4$ following nodes: $\B{i}{1},\B{i}{2},\B{i}{3},\B{i}{4}$. These nodes are then named $\B{i}{k}$, for ``\emph{the $k^{th}$ node of the bit i}''. For instance, node $10$ is the fourth node of the bit $1$, thus $10$ is called $\B{1}{4}$. In the following, we will refer to these four nodes as the $i^{th}$ \emph{bit-block}.

A binary value can be associated to each bit-block according to the $p$-values of each nodes in the bit-block. We will formally define  this association later, but we can already say that in this example, according to the $p$-value of all the nodes in the 4 bit-blocks, $G_4$ encodes the binary integer  $0010$.

\paragraph{$\mathbf{G_N}$ definition:}
In the following, we formally describe the graph $G_N=(V_N,E_N)$.
\begin{enumerate}
\item  $V_N= V_N^{\rond} \cup V_N^{\carre} $ where
\begin{eqnarray*}
V_N^{\rond}&=&\bigcup_{0\leq i<N}\{\B{i}{k}|k=1,2,3,4\}\\
V_N^{\carre}&=&\bigcup_{0\leq j< i<N}\{\Rr{i}{j},\Rl{i}{j}\}\\
\end{eqnarray*}
\item $E_N=E_N^{\rond} \cup E_N^{\carre} $ where
\begin{eqnarray*}
E_N^{\rond}&=&\bigcup_{0\leq i <N}\{(\B{i}{k},\B{i}{k+1})|k=1,2,3\} \\
E_N^{\carre}&=&\bigcup_{0\leq j < i <N}\{(\B{i}{1},\Rr{i}{j})~,~(\Rr{i}{j},\Rl{i}{j})~,~(\Rl{i}{j},\B{j}{4})\}
\end{eqnarray*}
\end{enumerate}

\noindent Figure \ref{fig:example} gives a partial view of the graph $G_N$ corresponding to the $i$th bit-block.
\begin{figure}[htbp]
   \centering
   \includegraphics[width=\textwidth]{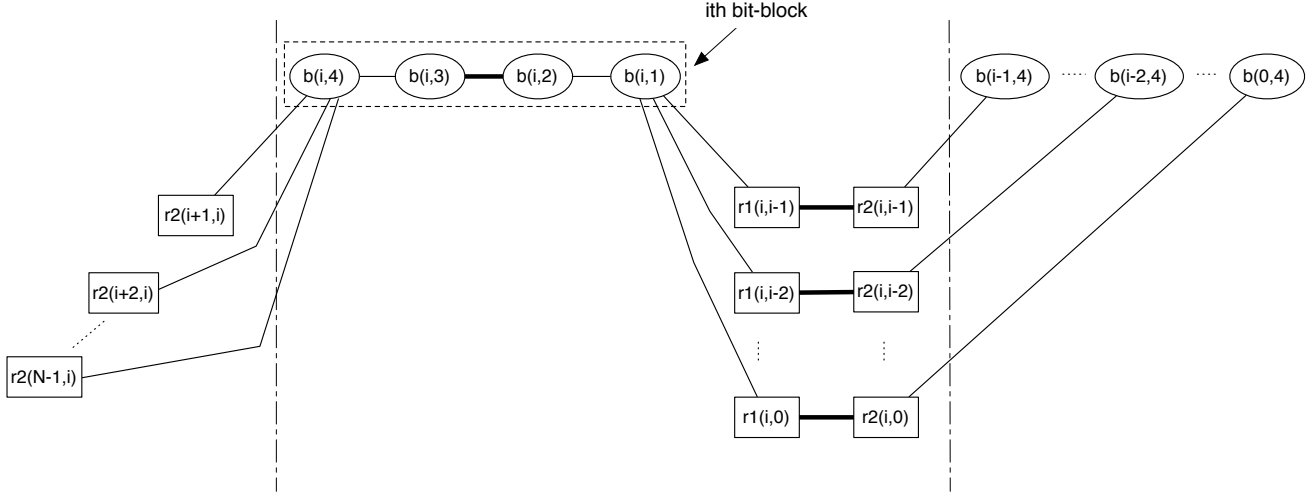} 
   \caption{A partial view of  graph $G_N$}
   \label{fig:example}
\end{figure}

Our execution is based on the maximal matching $M$ computed by the algorithm $\mathcal{M}$:
$${M}=\{(\B{i}{2},\B{i}{3})|0\leq i<N\} \cup \{ (\Rr{i}{j},\Rl{i}{j})| 0\leq j<i<N \}$$
This maximal matching $M$ is encoded with the $m$-variable. Then we have:
$$m_{\B{i}{2}}=\B{i}{3} , m_{\B{i}{3}}=\B{i}{2} ,m_{\Rr{i}{j}}=\Rl{i}{j}\text{ and }m_{\Rl{i}{j}}=\Rr{i}{j}$$
This matching is an $\frac{1}{2}$-approximation of the maximum matching and the algorithm $\mathcal{M}^+$ updates this approximation  building a $\frac{2}{3}$-approximation of the maximum matching based on $M$.  This $\frac{2}{3}$-approximation is encoded with the $p$-variable in $\mathcal{M}^+$. We also use the variable $p$  to encode a bit associated to a bit-block. The two following definitions give this association:

\begin{definition}[Bit-block encoding]
In graph $G_N$, let $\{\B{i}{1},\B{i}{2},\B{i}{3},\B{i}{4}\}$ be the
$i^{th}$ bit-block, for some $0\leq i<N$. This bit-block encodes the
value 1 (\emph{resp.} 0) if the edge $(\B{i}{2},\B{i}{3})$ is in state
\emph{On} (\emph{resp.} Off).
\end{definition}
Note that the value is not always defined.  We can associate an integer $\omega$ to such a configuration of the graph. 

\begin{definition}[$\omega$-configuration]
Let $\omega$ be represented the integer  such that $\omega <2^N$, a
configuration is said to be an \emph{$\omega$-configuration}   if for
any integer $i \leq N$, the $i^{th}$ bit of $\omega$ is the value encoded by the $i^{th}$ block of nodes.
\end{definition} 

Figure \ref{figure-010} shows a $3$-configuration.

\paragraph{Identifiers:}
In order to exhibit our execution counting from $0$ to $2^N-1$, we need to be able to switch edges between \emph{on} and \emph{off}. This can be done executing the guarded rules of $\mathcal{M}^+$. Since this algorithm uses identifiers, we need some properties on identifiers of nodes in $G_N$. The $ident$ function gives the identifier associated to a node in $V_N$. We assume each node has a unique identifier.  These identifiers must satisfy the three following properties:

\begin{property}[Identifiers order] \label{identOrder}
Let $\B{i}{k}, \B{i'}{k'}, \B{i}{2}$ and $\B{i}{3}$ be nodes in  $V_N^{\rond}$, and $\Rr{i}{j}$ and $\Rl{i}{j}$ be nodes in $V_N^{\carre}$. We have:
\begin{enumerate}
\item $ident(\B{i}{k})>ident(\B{i'}{k'})\textrm{  if } (i>i') \lor (i=i'\land k>k')$
\item $ident(\B{i}{2})<ident(\Rr{i}{j})$
\item $ident(\B{i}{3})>ident(\Rl{j}{i})$
\end{enumerate}
\end{property}

\noindent Note that in  graph $G_N$, it exists an \emph{ident} function that satisfies Property \ref{identOrder}. For instance, the property holds for the following naming:\\
Let $c = |V_N^{\rond}|$ and $s = \frac{|V_N^{\carre}|}{2}$. There are $c$ nodes of kind $b$, $s$ nodes of kind $r_1$ and $s$ nodes of kind $r_2$ as well. 
\begin{itemize}
\item Nodes of kind $r_2$ are named from $1$ to $s$
\item Nodes of kind $b$ are named from $s+1$ to $s+c$ such that:\\
$$\forall i, 0\leq i <N, \forall k\in \{1,2,3,4\}: ident(b(i,k)) = s+i+k$$
\item Nodes of kind $r_1$ are named from $s+c+1$ to $s+c+s$
\end{itemize}

Figure \ref{figure-010} shows graph $G_4$ with such a naming.

\paragraph{Counting from $\mathbf{0}$ to $\mathbf{2^N-1}$:} We will build an execution containing all $\omega$-configurations with $1\leq \omega <2^N $  To to this, we will build an execution from $\omega$-configuration to $(\omega+1)$-configuration using ``$+1$'' operation. This allows for the counting from $\mathbf{0}$ to $\mathbf{2^N-1}$. As we said before, the nodes in $V_N^{\carre}$ are used to implement the ``$+1$'' operation. To do that, we need to be able to switch bit from 0 to 1 and from 1 to
0, in a clever way. To switch from 0 to 1 is easier than to switch from 1 to 0. The nodes in $V_N^{\carre}$ are used to implement the switch from 1 to 0. The main scheme is the following: let us consider a binary integer $x$. The '+1' operation consists in finding the
rightmost 0 in $x$. Then all 1 at the right of this 0 have to switch to 0 and this 0 has to switch to 1 (if $x=x' 011\ldots1$ then
$x+1=x'100\ldots0$).  Let us assume that 0 is the $i^{th}$ bit of $x$. The $i^{th}$ bit-block has to switch from 0 to 1 during the '+1'
operation. Afterwards,  each $j^{th}$ bit-block, with $0\leq j< i$, has to switch from 1 to 0. To perform this switch, we use vertices in $V_N^{\carre}$.

We will now describe a piece of the execution, starting on the configuration represented on Figure \ref{figure-010}. The graph drawn in this figure encodes integer $(0010)$. We illustrate the use of the \emph{$p$-values} by an arrow and the absence of the arrow means that the \emph{$p$-value} of the node equals to $null$. First, we will focus on vertices in the $0$th bit-block. Edge
$(\B{0}{2},\B{0}{3})$ belongs to the underlying maximal matching represented by bold edges and is in state \emph{Off}.  Lemma
\ref{lem:offTOon}, describes an execution from the $0010$-configuration represented on Figure~\ref{figure-010} to the $0011$-configuration represented on Figure~\ref{figure-011}.  Moreover,  Figures \ref{figure-011}, \ref{figure-011-1}, \ref{figure-011-2}, \ref{figure-011-3} and \ref{figure-100} illustrate the transformation from $0011$-configuration to $0111$-configuration in graph $G_4$.

\begin{theorem}\label{th:execution:+1}
Let  $\omega$ be an integer such that $\omega < 2^N-1$. There exists a finite execution to transform an $\omega$-configuration 
into an $(\omega+1)$-configuration. 
\end{theorem}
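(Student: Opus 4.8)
The plan is to describe the "$+1$" operation explicitly on the binary representation and realize it as a concatenation of the edge-switching executions provided by Lemmas \ref{lem:offTOon} and \ref{lem:onTOoff}, using the $\carre$-nodes to handle the carries. Write $\omega$ in binary and let $i$ be the position of the rightmost $0$, so that $\omega = x'0\underbrace{1\cdots1}_{i}$ and $\omega+1 = x'1\underbrace{0\cdots0}_{i}$. In the graph this means: the $i$th bit-block must switch from state \emph{Off} to state \emph{On}, and each $j$th bit-block for $0\le j<i$ must switch from state \emph{On} to state \emph{Off}; all other bit-blocks must remain unchanged. First I would handle the bit $i$: since bit $i$ currently encodes $0$, the edge $(\B{i}{2},\B{i}{3})$ is in state \emph{Off}, and I would check that the hypotheses of Lemma \ref{lem:offTOon} are met with $u=\B{i}{2}$, $v=\B{i}{3}$, and the single neighbours $y,x$ being the appropriate $\rond$- and $\carre$-neighbours of $\B{i}{2},\B{i}{3}$; the identifier inequalities required ($Ident(x)<Ident(y)$, $y=Lowest\{BestRematch(u)\}$, $x=Lowest\{BestRematch(v)\}$, $v\le Lowest\{w\in\Voisin(x)\mid p_x=w\}$) follow from Property \ref{identOrder} together with the fact that in an $\omega$-configuration all $p$-values away from the $j$-blocks with $j<i$ are $null$. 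Applying Lemma \ref{lem:offTOon} produces a finite execution switching bit $i$ to $1$, touching only $\{x,u,v\}$.

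Next I would handle the carries, i.e. the $j$th bit-blocks for $0\le j<i$, processing them so that the edges $(\Rl{i}{j},\B{j}{4})$ and $(\B{i}{1},\Rr{i}{j})$ mediate the switch. For each such $j$, the goal is to drive the edge $(\B{j}{2},\B{j}{3})$ from \emph{On} to \emph{Off}. The idea is to first use the newly-$On$ state of bit $i$ (whose node $\B{i}{1}$ now points appropriately) to pull the $\carre$-pair $(\Rr{i}{j},\Rl{i}{j})$ into a rematch, which forces $p_{\B{j}{4}}$ to change away from $null$; this puts the edge $(\B{j}{2},\B{j}{3})$ into state \emph{Almost On} in the sense of Definition \ref{def:edge} (the single node $y$ adjacent on the $\B{j}{4}$ side now satisfies $p_y\notin\{null,u\}$). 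Then Lemma \ref{lem:onTOoff} applies to edge $(\B{j}{2},\B{j}{3})$ and gives a finite execution switching it to \emph{Off}, touching only the four nodes of that bit-block's switching gadget. Iterating over $j=i-1,i-2,\dots,0$ (or in whatever order the $\carre$-node identifiers dictate), and finally letting the $\carre$-nodes $\Rr{i}{j},\Rl{i}{j}$ settle back so that bit $i$ ends in a clean \emph{On} state, yields the full execution; concatenating finitely many finite executions gives a finite execution, and since at each stage the only activable nodes are those inside the gadget currently being processed, the pieces do not interfere.

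The main obstacle I expect is the bookkeeping needed to check two things simultaneously: (a) that after switching bit $i$ and while processing the carry blocks, the intermediate configurations genuinely satisfy the \emph{Off}/\emph{On}/\emph{Almost On} hypotheses of Lemmas \ref{lem:offTOon} and \ref{lem:onTOoff} — in particular that the $\carre$-nodes $\Rr{i}{j},\Rl{i}{j}$ have exactly the right $p$-values and $\alpha,\beta$-values at the moment each lemma is invoked — and (b) that no bit-block other than $i$ and the $j<i$ blocks ever becomes activable, so that the final configuration is exactly the $(\omega+1)$-configuration and is stable with respect to the sub-execution. This requires a careful case analysis of the guards of the five rules of $\mathcal{M}^+$ on the $\carre$-nodes and on the boundary nodes $\B{i}{1}$ and $\B{j}{4}$, using Property \ref{identOrder} (especially items 2 and 3, which were tailored precisely so that the $\carre$-nodes never "win" a rematch against the $\rond$-nodes they should defer to). I would organize this as a sequence of short claims, one per bit-block processed, each reducing to one application of Lemma \ref{lem:offTOon} or \ref{lem:onTOoff} plus a verification of the identifier side-conditions, and illustrate the whole construction on $G_4$ via Figures \ref{figure-011}--\ref{figure-100} as already announced in the text.
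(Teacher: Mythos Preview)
Your plan has the right ingredients (decompose into applications of Lemmas~\ref{lem:offTOon} and~\ref{lem:onTOoff}, use the $\carre$-edges $(\Rr{i}{j},\Rl{i}{j})$ to drive the carries into \emph{Almost On}), but the order in which you process bit $i$ and the carry bits is backwards, and this is a genuine gap, not a cosmetic one.

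You propose to switch bit $i$ to \emph{On} first and then ``use the newly-\emph{On} state of bit $i$'' to activate the $\carre$-pairs. But once $(\B{i}{2},\B{i}{3})$ is \emph{On} we have $p_{\B{i}{1}}=\B{i}{2}$. Node $\B{i}{1}$ is the \emph{only} single neighbour of $\Rr{i}{j}$, so from that point on $BestRematch(\Rr{i}{j})=(null,null)$, hence $\alpha_{\Rr{i}{j}}=null$, hence \textbf{AskFirst}$(\Rl{i}{j},\Rr{i}{j})=\textbf{AskFirst}(\Rr{i}{j},\Rl{i}{j})=null$. Neither \textbf{MatchFirst} nor \textbf{MatchSecond} is enabled for the $\carre$-pair, so there is no move that sets $p_{\B{j}{4}}$ away from $null$; in particular the hypothesis $y=Lowest\{BestRematch(u)\}$ of Lemma~\ref{lem:offTOon} fails for the $\carre$-edge, and your cascade stalls at the very first carry.

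The paper's proof uses the opposite order, in four phases: (1)~while $p_{\B{i}{1}}$ is still $null$, switch every $\carre$-edge $(\Rr{i}{j},\Rl{i}{j})$ to \emph{On} via Lemma~\ref{lem:offTOon}; since that lemma only moves $\{x,u,v\}$, the shared node $\B{i}{1}$ (playing the role of $y$) keeps $p_{\B{i}{1}}=null$ throughout, and each such switch sets $p_{\B{j}{4}}=\Rl{i}{j}$, putting $(\B{j}{2},\B{j}{3})$ into \emph{Almost On}; (2)~switch each $(\B{j}{2},\B{j}{3})$ to \emph{Off} via Lemma~\ref{lem:onTOoff}; (3)~only now switch $(\B{i}{2},\B{i}{3})$ to \emph{On}; this sets $p_{\B{i}{1}}=\B{i}{2}\notin\{null,\Rr{i}{j}\}$, which in turn puts every $\carre$-edge into \emph{Almost On}; (4)~switch every $\carre$-edge back to \emph{Off} via Lemma~\ref{lem:onTOoff}. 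Your verification paragraph (checking the identifier side-conditions from Property~\ref{identOrder} and the non-interference of other bit-blocks) is fine in spirit, but it must be applied to this four-phase sequence rather than to the one you wrote.
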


\begin{proof}
Let $i$ be the integer such that the $i-1$ first bits of $\omega$ equal to $1$ and the value of its $i^{th}$ bit to $0$. This implies that the $ith$ bit of $\omega+1$ bits is the first bit equal to $1$. 

We distinguish two cases : $i=0$ and $i>0$.

In the case where $i=0$, edge $(\B{0}{2},\B{0}{3})$ is in state \emph{Off} by definition.  Since the $0th$ bit of integer $\omega+1$ is equal to $1$, $(\B{0}{2},\B{0}{3})$ is in state \emph{On} in $(\omega+1)$-configuration. By  Property \ref{identOrder},  we have $Ident(\B{0}{1}) <Ident(\B{0}{4})$ and by definition of edge in state \emph{Off}, $p_{\B{0}{1}}=p_{\B{0}{4}}=null $.   Note that   $ \B{0}{2}  \leq Lowest \{w\in \Voisin(\B{0}{1})   |p_{\B{0}{1}} =w\}$. Since nodes $\B{0}{3} $ and $\B{0}{2}$ only have one Single node as neighbor,  the hypotheses of  Lemma \ref{lem:offTOon}  are satisfied 
From Lemma \ref{lem:offTOon}, there exists an execution to switch edge $(\B{0}{2},\B{0}{3})$ from state \emph{Off} to state \emph{On}.  At the end, the least significant bit of the integer correspond to this current configuration is set to $1$. So we obtain a $(\omega+1)$-configuration. 

 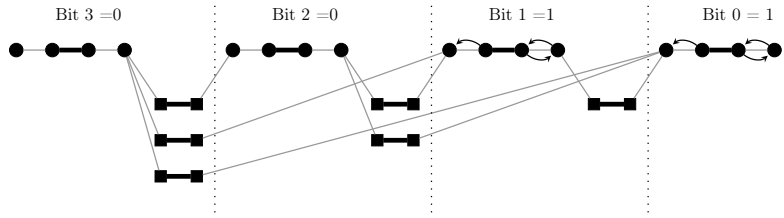
\begin{figure}[!hbt]
 \centering
 \scalebox{0.6}{
\begin{tikzpicture}[xscale = -0.8,yscale=0.8]
	\tikzstyle{legend}=[rectangle, thin,minimum width=2.5cm, minimum height=0.8cm,black]
	\tikzstyle{vertex}=[draw,circle,minimum size=0.3cm,fill=black,inner sep=0pt]
	\tikzstyle{vertex retour}=[draw,rectangle,minimum size=0.25cm,fill=black,inner sep=0pt]
	\tikzstyle{matched edge} = [draw,line width=3pt,-,black]
	\tikzstyle{edge} = [draw,thick,-,black!40]
	\tikzstyle{pointer} = [draw,  thick,->,>=stealth,bend left=40,black,shorten >=1pt]

\foreach \x in {3,4,5,6,9,10,11,12,15,16,17,18,21,22,23,24}
{\node[vertex] (v-\x) at (\x,0.5) {};}

\foreach \from/\to in {3/4,5/6,9/10,11/12,15/16,17/18,21/22,23/24}
{       \draw[edge] (v-\from)--(v-\to);}
\foreach \from/\to in {4/5,10/11,16/17,22/23}
{       \draw[matched edge] (v-\from)to (v-\to);}
\node[vertex retour] (r-32) at (20,-1) {};
\node[vertex retour] (r-31) at (20,-2) {};
\node[vertex retour] (r-30) at (20,-3) {};
\node[vertex retour] (r-21) at (14,-1) {};
\node[vertex retour] (r-20) at (14,-2) {};
\node[vertex retour] (r-10) at ( 8,-1) {};
\node[vertex retour] (l-32) at (19,-1) {};
\node[vertex retour] (l-31) at (19,-2) {};
\node[vertex retour] (l-30) at (19,-3) {};
\node[vertex retour] (l-20) at (13,-2) {};
\node[vertex retour] (l-10) at ( 7,-1) {};
\node[vertex retour] (l-21) at (13,-1) {};
\foreach \x in {30,31,32,21,20,10}
{\draw[matched edge] (r-\x)--(l-\x);}
\draw[edge] (v-9)--(r-10);
\foreach \x in {21,20}
{\draw[edge] (v-15)--(r-\x);}
\foreach \x in {31,30,32}
{\draw[edge] (v-21)--(r-\x);}
\foreach \x in {10,20,30}
{\draw[edge] (v-6)--(l-\x);}
\foreach \x in {21,31}
{\draw[edge] (v-12)--(l-\x);}
\draw[edge] (v-18)--(l-32);
\foreach \x in {6.5,12.5,18.5}
{\draw [loosely dotted,thick] (\x,-4) -- (\x,2) ;}

\draw[pointer] (v-3) to (v-4);
\draw[pointer] (v-4) to (v-3);
\draw[pointer] (v-5) to (v-6);
\draw[pointer] (v-9) to (v-10);
\draw[pointer] (v-10) to (v-9);
\draw[pointer] (v-11) to (v-12);
\node[legend] (b1) at (4,1.5) {Bit 0 = 1};
\node[legend] (b2) at (10,1.5) {Bit 1 =1};
\node[legend] (b3) at (16,1.5) {Bit 2 =0};
\node[legend] (b4) at (22,1.5) {Bit 3 =0};

\end{tikzpicture}
}
 \caption{After turning on the $0$th bit-block,  $G_4$ encodes   0011.}
\label{figure-011}
 \end{figure}

In the case where $i>0$,  for every integer $ j$ from $0$ to $i-1$, edge $(\B{j}{2},\B{j}{3})$ is  in state \emph{On} and  edge $(\B{i}{2},\B{i}{3})$ is in state \emph{Off}.  

More precisely, we can execute the following sequence of moves :

\begin{enumerate}
\item For each integer  $j$, $1 \leq j \leq  i-1$, edge $(\Rr{i}{j},\Rl{i}{j})$ is in  state \emph{Off}. Note that node $\Rr{i}{j}$ (resp.  $Rl{i}{j}$) is adjacent to  one Single node $\B{i}{1}$ (resp.  $\B{j}{4}$). Since  $\B{j}{4}  \leq Lowest \{w\in \Voisin(\Rl{i}{j})  |p_{\Rl{i}{j}} =w\}$,  the hypotheses of  Lemma \ref{lem:offTOon}  are satisfied. Thus from Lemma \ref{lem:offTOon}, we can exhibit an execution to switch edges $(\Rr{i}{j},\Rl{i}{j})$ from state \emph{Off} to state \emph{On}. The configuration shown in Figure \ref{figure-011-1} that corresponds to this step.

 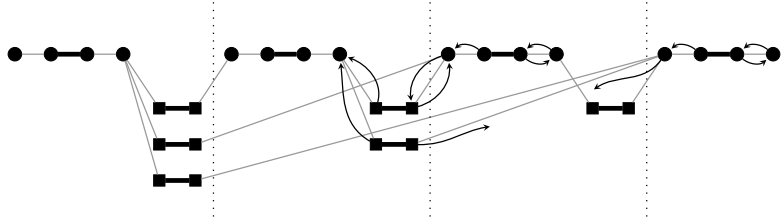
\begin{figure}[!hbt]
 \centering
 \scalebox{0.6}{
\begin{tikzpicture}[xscale = -0.8,yscale=0.8]
	\tikzstyle{legend}=[rectangle, thin,minimum width=2.5cm, minimum height=0.8cm,black]
	\tikzstyle{vertex}=[draw,circle,minimum size=0.3cm,fill=black,inner sep=0pt]
	\tikzstyle{vertex retour}=[draw,rectangle,minimum size=0.25cm,fill=black,inner sep=0pt]
	\tikzstyle{matched edge} = [draw,line width=3pt,-,black]
	\tikzstyle{edge} = [draw,thick,-,black!40]
	\tikzstyle{pointer} = [draw,  thick,->,>=stealth,bend left=40,black,shorten >=1pt]

\foreach \x in {3,4,5,6,9,10,11,12,15,16,17,18,21,22,23,24}
{\node[vertex] (v-\x) at (\x,0.5) {};}

\foreach \from/\to in {3/4,5/6,9/10,11/12,15/16,17/18,21/22,23/24}
{       \draw[edge] (v-\from)--(v-\to);}
\foreach \from/\to in {4/5,10/11,16/17,22/23}
{       \draw[matched edge] (v-\from)to (v-\to);}
\node[vertex retour] (r-32) at (20,-1) {};
\node[vertex retour] (r-31) at (20,-2) {};
\node[vertex retour] (r-30) at (20,-3) {};
\node[vertex retour] (r-21) at (14,-1) {};
\node[vertex retour] (r-20) at (14,-2) {};
\node[vertex retour] (r-10) at ( 8,-1) {};
\node[vertex retour] (l-32) at (19,-1) {};
\node[vertex retour] (l-31) at (19,-2) {};
\node[vertex retour] (l-30) at (19,-3) {};
\node[vertex retour] (l-20) at (13,-2) {};
\node[vertex retour] (l-10) at ( 7,-1) {};
\node[vertex retour] (l-21) at (13,-1) {};
\foreach \x in {30,31,32,21,20,10}
{\draw[matched edge] (r-\x)--(l-\x);}
\draw[edge] (v-9)--(r-10);
\foreach \x in {21,20}
{\draw[edge] (v-15)--(r-\x);}
\foreach \x in {31,30,32}
{\draw[edge] (v-21)--(r-\x);}
\foreach \x in {10,20,30}
{\draw[edge] (v-6)--(l-\x);}
\foreach \x in {21,31}
{\draw[edge] (v-12)--(l-\x);}
\draw[edge] (v-18)--(l-32);
\foreach \x in {6.5,12.5,18.5}
{\draw [loosely dotted,thick] (\x,-4) -- (\x,2) ;}
\draw[pointer] (v-6) to [out=-30,in=170] (8,-0.5);
\draw[pointer] (l-20) to [out=15,in=-175] (10.8,-1.5);
\draw[pointer] (r-20) to [out=-40,in=-165] (v-15);
\draw[pointer] (v-12) to (l-21);
\draw[pointer] (l-21) to (v-12);
\draw[pointer] (r-21) to [out=50,in=-215] (v-15);

\draw[pointer] (v-3) to (v-4);
\draw[pointer] (v-4) to (v-3);
\draw[pointer] (v-5) to (v-6);
\draw[pointer] (v-9) to (v-10);
\draw[pointer] (v-10) to (v-9);
\draw[pointer] (v-11) to (v-12);

\end{tikzpicture}
}
 \caption{After activating the $\carre\,$-nodes of the $3$rd bit-block,
   $G_4$ does not encode any integer.}
\label{figure-011-1}
 \end{figure}

\item Now, for each integer  $j$, $1 \leq j \leq  i-1$,  edge $(\B{j}{2},\B{j}{3})$ is in state \emph{Almost on}.  From Lemma \ref{lem:onTOoff}, (since $Ident(\B{j}{1})< Ident(\B{j}{4})$) an execution to switch edge $(\B{j}{2},\B{j}{3})$ from state \emph{Almost on} to state \emph{Off} is performed. The configuration shown in Figure \ref{figure-011-2} that corresponds to this step.

 \begin{figure}[!hbt]
 \centering
 \scalebox{0.6}{
\begin{tikzpicture}[xscale = -0.8,yscale=0.8]
	\tikzstyle{legend}=[rectangle, thin,minimum width=2.5cm, minimum height=0.8cm,black]
	\tikzstyle{vertex}=[draw,circle,minimum size=0.3cm,fill=black,inner sep=0pt]
	\tikzstyle{vertex retour}=[draw,rectangle,minimum size=0.25cm,fill=black,inner sep=0pt]
	\tikzstyle{matched edge} = [draw,line width=3pt,-,black]
	\tikzstyle{edge} = [draw,thick,-,black!40]
	\tikzstyle{pointer} = [draw,  thick,->,>=stealth,bend left=40,black,shorten >=1pt]

\foreach \x in {3,4,5,6,9,10,11,12,15,16,17,18,21,22,23,24}
{\node[vertex] (v-\x) at (\x,0.5) {};}

\foreach \from/\to in {3/4,5/6,9/10,11/12,15/16,17/18,21/22,23/24}
{       \draw[edge] (v-\from)--(v-\to);}
\foreach \from/\to in {4/5,10/11,16/17,22/23}
{       \draw[matched edge] (v-\from)to (v-\to);}
\node[vertex retour] (r-32) at (20,-1) {};
\node[vertex retour] (r-31) at (20,-2) {};
\node[vertex retour] (r-30) at (20,-3) {};
\node[vertex retour] (r-21) at (14,-1) {};
\node[vertex retour] (r-20) at (14,-2) {};
\node[vertex retour] (r-10) at ( 8,-1) {};
\node[vertex retour] (l-32) at (19,-1) {};
\node[vertex retour] (l-31) at (19,-2) {};
\node[vertex retour] (l-30) at (19,-3) {};
\node[vertex retour] (l-20) at (13,-2) {};
\node[vertex retour] (l-10) at ( 7,-1) {};
\node[vertex retour] (l-21) at (13,-1) {};
\foreach \x in {30,31,32,21,20,10}
{\draw[matched edge] (r-\x)--(l-\x);}
\draw[edge] (v-9)--(r-10);
\foreach \x in {21,20}
{\draw[edge] (v-15)--(r-\x);}
\foreach \x in {31,30,32}
{\draw[edge] (v-21)--(r-\x);}
\foreach \x in {10,20,30}
{\draw[edge] (v-6)--(l-\x);}
\foreach \x in {21,31}
{\draw[edge] (v-12)--(l-\x);}
\draw[edge] (v-18)--(l-32);
\foreach \x in {6.5,12.5,18.5}
{\draw [loosely dotted,thick] (\x,-4) -- (\x,2) ;}
\draw[pointer] (v-6) to [out=-30,in=170] (8,-0.5);
\draw[pointer] (l-20) to [out=15,in=-175] (10.8,-1.5);
\draw[pointer] (r-20) to [out=-40,in=-165] (v-15);
\draw[pointer] (v-12) to (l-21);
\draw[pointer] (l-21) to (v-12);
\draw[pointer] (r-21) to [out=50,in=-215] (v-15);
\end{tikzpicture}
}
 \caption{Starting to turn off the $0$th and $1$st bit-blocks.}
\label{figure-011-2}
 \end{figure}
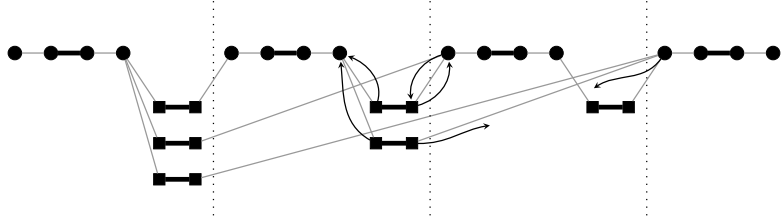

\item  Edge $(\B{i}{2},\B{i}{3})$ is still in state \emph{off}. Using the same argument of step (1),  from Lemma \ref{lem:offTOon}, we can exhibit an execution to switch edges $(\B{i}{2},\B{i}{3})$ from state \emph{Off} to state \emph{On}.

 \begin{figure}[!hbt]
 \centering
 \scalebox{0.6}{
\begin{tikzpicture}[xscale = -0.8,yscale=0.8]
	\tikzstyle{legend}=[rectangle, thin,minimum width=2.5cm, minimum height=0.8cm,black]
	\tikzstyle{vertex}=[draw,circle,minimum size=0.3cm,fill=black,inner sep=0pt]
	\tikzstyle{vertex retour}=[draw,rectangle,minimum size=0.25cm,fill=black,inner sep=0pt]
	\tikzstyle{matched edge} = [draw,line width=3pt,-,black]
	\tikzstyle{edge} = [draw,thick,-,black!40]
	\tikzstyle{pointer} = [draw, thick,->,>=stealth,bend left=40,black,shorten >=1pt]

\foreach \x in {3,4,5,6,9,10,11,12,15,16,17,18,21,22,23,24}
{\node[vertex] (v-\x) at (\x,0.5) {};}

\foreach \from/\to in {3/4,5/6,9/10,11/12,15/16,17/18,21/22,23/24}
{       \draw[edge] (v-\from)--(v-\to);}
\foreach \from/\to in {4/5,10/11,16/17,22/23}
{       \draw[matched edge] (v-\from)to (v-\to);}
\node[vertex retour] (r-32) at (20,-1) {};
\node[vertex retour] (r-31) at (20,-2) {};
\node[vertex retour] (r-30) at (20,-3) {};
\node[vertex retour] (r-21) at (14,-1) {};
\node[vertex retour] (r-20) at (14,-2) {};
\node[vertex retour] (r-10) at ( 8,-1) {};
\node[vertex retour] (l-32) at (19,-1) {};
\node[vertex retour] (l-31) at (19,-2) {};
\node[vertex retour] (l-30) at (19,-3) {};
\node[vertex retour] (l-20) at (13,-2) {};
\node[vertex retour] (l-10) at ( 7,-1) {};
\node[vertex retour] (l-21) at (13,-1) {};
\foreach \x in {30,31,32,21,20,10}
{\draw[matched edge] (r-\x)--(l-\x);}
\draw[edge] (v-9)--(r-10);
\foreach \x in {21,20}
{\draw[edge] (v-15)--(r-\x);}
\foreach \x in {31,30,32}
{\draw[edge] (v-21)--(r-\x);}
\foreach \x in {10,20,30}
{\draw[edge] (v-6)--(l-\x);}
\foreach \x in {21,31}
{\draw[edge] (v-12)--(l-\x);}
\draw[edge] (v-18)--(l-32);
\foreach \x in {6.5,12.5,18.5}
{\draw [loosely dotted,thick] (\x,-4) -- (\x,2) ;}
\draw[pointer] (v-6) to [out=-30,in=170] (8,-0.5);
\draw[pointer] (l-20) to [out=15,in=-175] (10.8,-1.5);
\draw[pointer] (r-20) to [out=-40,in=-165] (v-15);
\draw[pointer] (v-12) to (l-21);
\draw[pointer] (l-21) to (v-12);
\draw[pointer] (r-21) to [out=50,in=-215] (v-15);
\draw[pointer] (v-15) to (v-16);
\draw[pointer] (v-16) to (v-15);
\draw[pointer] (v-17) to (v-18);

\end{tikzpicture}
}
 \caption{Starting to turn on the $3$rd  bit-block.}
\label{figure-011-3}
 \end{figure}
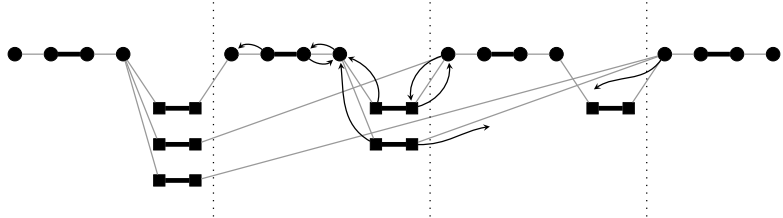

\item Now, for each integer $j$, $1 \leq j \leq  i-1$, edge $(\Rr{i}{j},\Rl{i}{j})$ is now in state \emph{Almost on}. From Lemma \ref{lem:onTOoff}, an execution to switch edge $(\Rr{i}{j},\Rl{i}{j})$ from state \emph{Almost on} to state \emph{Off}.

 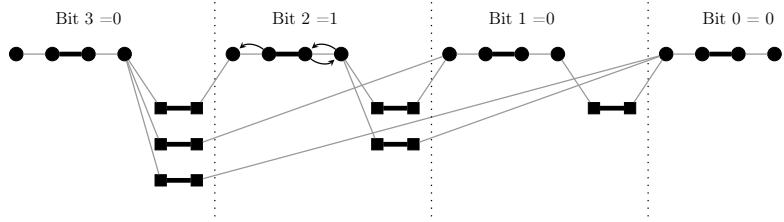
\begin{figure}[!hbt]
 \centering
 \scalebox{0.6}{

\begin{tikzpicture}[xscale = -0.8,yscale=0.8]
	\tikzstyle{legend}=[rectangle, thin,minimum width=2.5cm, minimum height=0.8cm,black]
	\tikzstyle{vertex}=[draw,circle,minimum size=0.3cm,fill=black,inner sep=0pt]
	\tikzstyle{vertex retour}=[draw,rectangle,minimum size=0.25cm,fill=black,inner sep=0pt]
	\tikzstyle{matched edge} = [draw,line width=3pt,-,black]
	\tikzstyle{edge} = [draw,thick,-,black!40]
	\tikzstyle{pointer} = [draw, thick,->,>=stealth,bend left=40,black,shorten >=1pt]
\foreach \x in {3,4,5,6,9,10,11,12,15,16,17,18,21,22,23,24}
{\node[vertex] (v-\x) at (\x,0.5) {};}

\foreach \from/\to in {3/4,5/6,9/10,11/12,15/16,17/18,21/22,23/24}
{       \draw[edge] (v-\from)--(v-\to);}
\foreach \from/\to in {4/5,10/11,16/17,22/23}
{       \draw[matched edge] (v-\from)to (v-\to);}
\node[vertex retour] (r-32) at (20,-1) {};
\node[vertex retour] (r-31) at (20,-2) {};
\node[vertex retour] (r-30) at (20,-3) {};
\node[vertex retour] (r-21) at (14,-1) {};
\node[vertex retour] (r-20) at (14,-2) {};
\node[vertex retour] (r-10) at ( 8,-1) {};
\node[vertex retour] (l-32) at (19,-1) {};
\node[vertex retour] (l-31) at (19,-2) {};
\node[vertex retour] (l-30) at (19,-3) {};
\node[vertex retour] (l-20) at (13,-2) {};
\node[vertex retour] (l-10) at ( 7,-1) {};
\node[vertex retour] (l-21) at (13,-1) {};
\foreach \x in {30,31,32,21,20,10}
{\draw[matched edge] (r-\x)--(l-\x);}
\draw[edge] (v-9)--(r-10);
\foreach \x in {21,20}
{\draw[edge] (v-15)--(r-\x);}
\foreach \x in {31,30,32}
{\draw[edge] (v-21)--(r-\x);}
\foreach \x in {10,20,30}
{\draw[edge] (v-6)--(l-\x);}
\foreach \x in {21,31}
{\draw[edge] (v-12)--(l-\x);}
\draw[edge] (v-18)--(l-32);
\foreach \x in {6.5,12.5,...,18.5}
{\draw [loosely dotted,thick] (\x,-4) -- (\x,2) ;}

\draw[pointer] (v-15) to (v-16);
\draw[pointer] (v-16) to (v-15);
\draw[pointer] (v-17) to (v-18);

\node[legend] (b1) at (4,1.5) {Bit 0 = 0};
\node[legend] (b2) at (10,1.5) {Bit 1 =0};
\node[legend] (b3) at (16,1.5) {Bit 2 =1};
\node[legend] (b4) at (22,1.5) {Bit 3 =0};

\end{tikzpicture}
}
 \caption{Ending  to turn off the 0th and 1st bit-blocks and to turn on  the $3$rd bit-block. $G_4$ encodes  0100.}
\label{figure-100}
 \end{figure}
\end{enumerate}
  
At the end of this execution, the configuration still verifies the two conditions, and the $i-1$ first bits of $\omega$ are set to $0$ and the $i^{th}$ to $1$. So we obtain a $(\omega+1)$-configuration.  
\end{proof}

From now, we can construct an instance from which an execution having $\Omega(2^{\sqrt n})$ moves  can be built.

\begin{corollary}\label{cor:execution:expo}
Algorithm $\mathcal{M}^+$ can stabilize after at most $\Omega(2^{\sqrt n})$ moves under the central daemon.
\end{corollary}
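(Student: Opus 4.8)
The plan is to instantiate the construction for a well-chosen integer $N$, to chain together the single-increment executions provided by Theorem~\ref{th:execution:+1} so as to count from $0$ to $2^N-1$, and finally to translate the parameter count $\Theta(N^2)$ of $G_N$ into the claimed bound in $n$.

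First I would fix $N$ and work in $G_N$ equipped with the matching $M$ and an identifier assignment satisfying Property~\ref{identOrder} (such an assignment exists, as exhibited just before the statement). I would check that $M$ really is a maximal matching of $G_N$, so that $\mathcal{M}^+$ runs on a legitimate underlying configuration: the edges $(\B{i}{2},\B{i}{3})$ and $(\Rr{i}{j},\Rl{i}{j})$ are pairwise vertex-disjoint, and every other edge of $E_N$ --- namely $(\B{i}{1},\B{i}{2})$, $(\B{i}{3},\B{i}{4})$, $(\B{i}{1},\Rr{i}{j})$ and $(\Rl{i}{j},\B{j}{4})$ --- is incident to a matched node, so no edge can be added to $M$. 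As starting configuration I would take the $0$-configuration, in which every $p$-value equals $null$, i.e.\ every matching edge of $M$ is in state \emph{Off}.

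Next I would build the execution by induction on $\omega$ ranging from $0$ to $2^N-1$. Theorem~\ref{th:execution:+1} yields, for every $\omega<2^N-1$, a finite execution from an $\omega$-configuration to an $(\omega+1)$-configuration, and each such execution is non-empty since an $\omega$-configuration and an $(\omega+1)$-configuration are distinct configurations. Concatenating these $2^N-1$ executions is legitimate: the final configuration of one is an $(\omega+1)$-configuration, which is exactly the kind of configuration required to start the next, because the hypotheses used in Lemmas~\ref{lem:offTOon} and~\ref{lem:onTOoff} only constrain the $p$-values (the edge states), and these lemmas repair the auxiliary variables $\alpha,\beta,s$ through an \emph{Update} move whenever needed. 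Moreover all the moves in Lemmas~\ref{lem:offTOon} and~\ref{lem:onTOoff} are performed one process at a time, hence the resulting execution is valid under the central daemon. Since $\mathcal{M}^+$ is silent, this finite execution extends to a maximal one ending in a stable configuration, and it still contains at least $2^N-1$ moves.

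Finally I would express this in terms of $n$. The graph $G_N$ has $n=|V_N^{\rond}|+|V_N^{\carre}|=4N+2\binom{N}{2}=N^2+3N$ nodes, hence $N=\tfrac12\bigl(\sqrt{9+4n}-3\bigr)\geq \sqrt n-2$ for all sufficiently large $N$; consequently the execution exhibited above has at least $2^N-1\geq 2^{\sqrt n-2}-1=\Omega(2^{\sqrt n})$ moves, which proves the corollary. I do not expect a genuine obstacle here, since Theorem~\ref{th:execution:+1} already does the heavy lifting; the only points requiring care are the bookkeeping ensuring that the chained single-increment executions compose into one valid central-daemon execution --- i.e.\ that each intermediate $\omega$-configuration satisfies the hypotheses of the two switching lemmas regardless of the residual values of $\alpha,\beta,s$ --- together with the elementary estimate relating $N$ to $\sqrt n$.
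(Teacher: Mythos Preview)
Your proposal is correct and follows essentially the same approach as the paper: start from the $0$-configuration in $G_N$, concatenate the single-increment executions supplied by Theorem~\ref{th:execution:+1} to count through all $\omega$-configurations, and then use $n=\Theta(N^2)$ to convert $2^N$ into $2^{\sqrt n}$. Your write-up is in fact more careful than the paper's on several points (verifying that $M$ is maximal, checking that the concatenation is a valid central-daemon execution, computing $n=N^2+3N$ explicitly, and extending to a maximal execution), but the underlying argument is the same.
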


\begin{proof}
To prove the corollary, we can exhibit an execution of $\Omega(2^{\sqrt n})$ moves.  Let $N$ be an integer. The initial configuration is a $0$-configuration in graph $G_N$.

We can build an execution that contains all the $\omega$-configurations for every value $\omega$, $1\leq \omega \leq2^N$.  By applying Theorem~\ref{th:execution:+1}, this execution can be split into $ 2^N$ parts corresponding to the execution from $\omega$-configuration to $(\omega+1)$-configuration, for $1\leq \omega \leq2^N$. Thus, this execution has $O(2^{N})$ configurations. Since graph $G_N$ has $O(N^2)$ vertices, this execution has $O(2^{\sqrt n})$ configurations and the corollary holds.
\end{proof}


\begin{thebibliography}{1}

\bibitem{AckermannGMRV11}
Heiner Ackermann, Paul~W. Goldberg, Vahab~S. Mirrokni, Heiko R{\"o}glin, and
  Berthold V{\"o}cking.
\newblock Uncoordinated two-sided matching markets.
\newblock {\em SIAM J. Comput.}, 40(1):92--106, 2011.

\bibitem{BerenbrinkFM08}
Petra Berenbrink, Tom Friedetzky, and Russell~A. Martin.
\newblock On the stability of dynamic diffusion load balancing.
\newblock {\em Algorithmica}, 50(3):329--350, 2008.

\bibitem{Dijkstra74}
Edsger~W. Dijkstra.
\newblock Self-stabilizing systems in spite of distributed control.
\newblock {\em Commun. ACM}, 17(11):643--644, 1974.

\bibitem{Dolev00}
Shlomi Dolev.
\newblock {\em Self-Stabilization}.
\newblock MIT Press, 2000.

\bibitem{GhoshM96}
Bhaskar Ghosh and S.~Muthukrishnan.
\newblock Dynamic load balancing by random matchings.
\newblock {\em J. Comput. Syst. Sci.}, 53(3):357--370, 1996.

\bibitem{HedetniemiJS01}
Stephen~T. Hedetniemi, David~Pokrass Jacobs, and Pradip~K. Srimani.
\newblock Maximal matching stabilizes in time o(m).
\newblock {\em Inf. Process. Lett.}, 80(5):221--223, 2001.

\bibitem{Hoefer13}
Martin Hoefer.
\newblock Local matching dynamics in social networks.
\newblock \emph{Inf. Comput.}, vol. 222, pages 20--35, 2013.

\bibitem{Karp}
John~E. Hopcroft and Richard~M. Karp.
\newblock An $n^{5/2}$ algorithm for maximum matchings in bipartite graphs.
\newblock {\em SIAM Journal on Computing}, 2(4):225--231, 1973.

\bibitem{HsuH92}
Su-Chu Hsu and Shing-Tsaan Huang.
\newblock A self-stabilizing algorithm for maximal matching.
\newblock {\em Inf. Process. Lett.}, 43(2):77--81, 1992.

\bibitem{Knuth}
Donald Knuth.
\newblock {\em Marriages stables et leurs relations avec d'autres probl{\`e}mes
  combinatoires}.
\newblock Les Presses de l'Universit{\'e} de Montr{\'e}al, 1976.

\bibitem{ManneMPT11}
Fredrik Manne, Morten Mjelde, Laurence Pilard, and S{\'e}bastien Tixeuil.
\newblock A self-stabilizing 2/3-approximation algorithm for the maximum
  matching problem.
\newblock {\em Theor. Comput. Sci.}, 412(40):5515--5526, 2011.

\end{thebibliography}
\end{document}